\documentclass[10pt]{amsart}
\UseRawInputEncoding
\usepackage[utf8]{inputenc}
\usepackage{amsfonts}
\usepackage{amssymb}
\usepackage{amsmath,amsthm}
\usepackage{amscd}
\usepackage{graphics}
\usepackage{cancel}
\usepackage{pdfsync}
\usepackage[british]{babel} 
\usepackage{mathrsfs}
\usepackage{enumitem}
\usepackage{stmaryrd}
\usepackage{mathrsfs}
\usepackage{wasysym}
\usepackage{latexsym}
\usepackage{xcolor}
\usepackage[colorlinks,linkcolor=blue,citecolor=blue,urlcolor=blue]{hyperref}

\numberwithin{equation}{section}

\newcommand{\beq}{\begin{equation}}
	\newcommand{\eeq}{\end{equation}}
\newcommand{\bea}{\begin{eqnarray}}
	\newcommand{\eea}{\end{eqnarray}}
\newcommand{\nn}{\nonumber}
\newcommand\noi{\noindent}
\newcommand{\tbf}{\textbf}

\newcommand{\rd}{\mathrm{d}}

\newcommand{\bk}{\begin{cases}}
	\newcommand{\ek}{\end{cases}}
\newcommand{\bs}{\boldsymbol}

\newcommand{\pd}{\partial}

\setcounter{MaxMatrixCols}{10}

\newtheorem{definition}{Definition}
\newtheorem{proposition}{Proposition}
\newtheorem{theorem}{Theorem}
\newtheorem{corollary}{Corollary}

\theoremstyle{definition}

\newtheorem{remark}{\textbf{Remark}}

\usepackage{enumerate}
\usepackage{float}
\usepackage{cancel}
\usepackage{hyperref}
\parindent=0pt
\usepackage{mathtools}
\usepackage{afterpage}

\allowdisplaybreaks

\usepackage{bm}

\begin{document}
	
	\title[St\"ackel and Eisenhart lifts, Haantjes geometry and Gravitation]{St\"ackel and Eisenhart lifts, Haantjes geometry and Gravitation}
	
	\author{Ond\v{r}ej Kub\r{u}}
	\address{Instituto de Ciencias Matem\'aticas, C/ Nicol\'as Cabrera, No 13--15, 28049 Madrid, Spain}
	\email{
		ondrej.kubu@icmat.es}
	\author{Piergiulio Tempesta}
	\address{Departamento de F\'{\i}sica Te\'{o}rica, Facultad de Ciencias F\'{\i}sicas, Universidad
		Complutense de Madrid, 28040 -- Madrid, Spain \\ and Instituto de Ciencias Matem\'aticas, C/ Nicol\'as Cabrera, No 13--15, 28049 Madrid, Spain}
	\email{piergiulio.tempesta@icmat.es, ptempest@ucm.es}

	\subjclass[2010]{MSC: 53A45, 58C40, 58A30.}
	
	\date{July 2, 2026}
	
	\begin{abstract}
	We study lifts of integrable systems by means of generalized Stäckel geometry. To this end, we present the notion of Stäckel lift as a unified setting for the construction of new classes of integrable Hamiltonian systems of physical interest. The Stäckel lift extends the geometric framework underlying both the  Riemannian and the Lorentzian-type classical Eisenhart lifts.
		Moreover, we prove that Hamiltonian systems constructed through momentum-dependent Stäckel matrices are naturally endowed with a non-trivial symplectic-Haantjes structure.
		
		We further illustrate applications to magnetic systems separable in cylindrical coordinates; we describe them within  the Stäckel framework by means of modified Stäckel bases.

		Finally, we show that explicitly momentum-dependent lifting matrices generate Platonic-wave geometries with potential applications in modified gravity theories, or momentum-dependent metrics of Hamilton and Finsler geometries.

	\end{abstract}
	
	\maketitle

	\tableofcontents
	
	\section{Introduction}
	The program of geometrization of dynamics, inaugurated by Eisenhart in 1928 \cite{E1928AM}, has been one of the most fruitful research developments in modern theoretical and mathematical physics. 
	
	The central idea behind the Eisenhart lift (also known as the Eisenhart-Duval lift \cite{DBKP1985PRD,DGH1991PRD}) is to
	describe the dynamics of a mechanical system with configuration space coordinates $q^1,\ldots,q^n$, subject to a scalar potential $V(q^i,t)$ and a magnetic field with vector potential $\bs{A}(q^i,t)$, in terms of the geodesics of a Lorentzian metric defined on an $(n+2)$-dimensional space-time. We will refer to this construction as the Lorentzian-Eisenhart lift.

	This viewpoint can also be applied to the more generic case of a lift from an $n$-dimensional to an $(n+1)$-dimensional space. In the literature, this is often referred to as the Riemannian-Eisenhart lift, since it allows one to define a (pseudo)-Riemannian metric in the higher-dimensional space.
	
	\vspace{2mm}
	
	The purpose of this article is to construct new Hamiltonian models of physical interest by establishing a connection between  the theory of geometric lifts and generalized St\"ackel geometry. More precisely,  we extend the framework underlying the Eisenhart lift by means of a lifting procedure, rooted in St\"ackel geometry, which we consequently call the \textit{St\"ackel lift}.

From the seminal studies by St\"ackel \cite{St1893CRAS,St1897AMPA}, Di Pirro \cite{DPAMPA1896}, and Painlev\'e \cite{PainlCRAS1897} to the recent work by Magri \cite{MPhysD2023}, St\"ackel geometry has represented a fundamental theoretical setting for the formulation of the theory of separation of variables for classical Hamiltonian integrable systems.

	Given an $n$-dimensional manifold $Q$ and a  separable Hamiltonian system defined on $T^{*}Q$, the main idea of the approach we propose is to lift the $n\times n$ St\"ackel matrix associated with the system to a higher-dimensional cotangent bundle $T^{*}\tilde{Q}$.

	This is achieved in terms of an extended, generalized St\"ackel matrix, the $(n+1)\times (n+1)$ \textit{lifting matrix}. It includes the original St\"ackel matrix and contains  in an additional row and column suitably chosen functions depending on a new pair of conjugate variables $(q^{n+1}, p_{n+1})$  with respect to the extended symplectic form $\tilde{\omega}= \sum_{i=1}^{n+1} dp_i \wedge dq^{i}$. Consequently, $(T^{*}\tilde{Q}, \tilde{\omega})$ acquires the structure of a $2(n+1)$-dimensional symplectic manifold.
	
	In this way, we define a \textit{generalized St\"ackel matrix}, which explicitly depends on the new momentum $p_{n+1}$. This procedure can be naturally iterated by introducing new pairs of conjugate variables and higher-dimensional lifting matrices, with the original Stäckel matrix appearing as a block in each iteration.
	
	This procedure allows us to generate infinitely many higher-dimensional integrable models, starting from a \textit{seed system}. In fact, the St\"ackel lifting procedure provides at once and in a straightforward way all the Hamiltonian functions of the lifted Hamiltonian system. Under a suitable choice of the lifting matrix, these models can be interpreted as geodesic Hamiltonians defined in a suitable pseudo-Riemannian space. However, it is also possible to generate families of non-geodesic integrable Hamiltonian systems possessing higher-order integrals of motion, which could give rise to nontrivial higher-rank Killing tensors.
	
	In the interesting work \cite{BSPLA2011}, integrable systems with momentum-dependent separation relations have been studied on a manifold with fixed dimension $n$. Our work in a sense parallels this research by defining geometric lifts that systematically generate higher-dimensional systems with momentum-dependent Stäckel matrices, naturally connecting  to Haantjes geometry and physical applications including gravitational waves and magnetic systems.	In fact, our goal is to show that the St\"ackel lift represents a flexible tool that can be used to generate several typologies of geometric lifts. In Theorem \ref{theo:1}, we will prove that, under mild hypotheses, the Eisenhart lift can be regarded as an instance of the St\"ackel lift.
	
	As a second objective of this article, we shall clarify the relation among Eisenhart and St\"ackel lifts and Haantjes geometry. 
	
	The study of Haantjes tensors has experienced a resurgence of interest in recent years, initiated by the pioneering work of Ferapontov and Marshall \cite{FM2007MA} on the Haantjes approach to integrable hydrodynamic systems. Recent developments include the different manifold structures proposed in \cite{AT2025prepr,BKM2025,CMP2025prepr,MGall13,MGall15,
		MMcSV2024}. In particular, Haantjes tensors have been recently related to symplectic geometry in the theory of $\omega \mathscr{H}$ manifolds \cite{TT2021JGP,TT2022AMPA,RTT2022CNS,KRTT2024PRA,RTT2024AMPA}. This geometric framework offers a natural language for the formulation of the theory of Hamiltonian integrable systems, and the determination of their total and partial separation variables.

	As one of our main results, in Theorem \ref{maintheorem} we prove that the St\"ackel (and, in particular, the Eisenhart) lift of a given Hamiltonian integrable system endows $T^{*}\tilde{Q}$ with the structure of a nontrivial symplectic-Haantjes manifold. Coherently with the geometric idea of the St\"ackel lift, the symplectic-Haantjes structure can be reconstructed from the one associated with the original system. Interestingly enough, in the case of a St\"ackel lift explicitly dependent on the momenta, the Haantjes tensors associated with the lifted Hamiltonian model are not projectable along the fibers of $T^{*}\tilde{Q}$.

The lifts we  shall discuss find obvious applications in classical mechanics because they inherently produce Hamiltonian systems that are, by construction, both \textit{integrable} (in a dense open subset of the extended phase space) and \textit{separable}.
The abundance of these systems obviously reflects the arbitrariness inherent in the procedure of lifting a given model to a higher-dimensional system.

The present approach can be naturally related to the Jacobi-Sklyanin separation equations \cite{Skl}. In fact, as we will show, these equations can be readily obtained \textit{a posteriori}, once again as a direct consequence of the generalized Stäckel geometry.

	However, this observation does not exhaust their relevance. In fact, we will show that, in the case of a momentum-dependent lifting matrix, we can produce nontrivial Hamiltonians having a straightforward interpretation in gravitational contexts. More precisely, we obtain Hamiltonian systems representing Platonic waves \cite{BM2013PRD}, and in particular, \textit{gravitational pp-waves}.
	
	We recall that a plane-fronted wave with parallel rays or $pp$-wave is a Lorentzian spacetime geometry admitting a null, parallel vector field with flat wave fronts. The broader class of Platonic waves is conformal to Bargmann-Eisenhart waves, whose  wave fronts may not be flat as for $pp$-waves. Under certain additional conditions, Bargmann-Eisenhart waves are exact solutions of Einstein's equations or of $f(R)$ gravity theories, which extend Einstein's theory through higher-order curvature terms.
	
	Moreover, we can naturally define nontrivial geometric structures known as \textit{Hamilton} and \textit{generalized Hamilton spaces} \cite{Miron2001,RSS2024}. These spaces are endowed with a momentum-dependent metric. Their corresponding analogues in the extended configuration space are known as Finsler and Lagrange spaces \cite{Voicu2021}. Owing to their connection with the Zermelo wind problem \cite{Zermelo} and its generalizations \cite{Aldea}, these geometries find applications in acoustics, optics, quantum control, and quantum mechanics (see e.g. \cite{Li} for references). Recently, Hamilton spaces have been investigated as effective structures for configuration and phase spaces in quantum gravity phenomenological models involving Lorentz invariance violation or noncommutative spacetime, in particular deformed special relativity \cite{BBGLP2015PRD}, and in connection with prospective mesoscopic-scale signals of quantum gravity \cite{CLR2022PRD} (see also the review \cite{Albuquerque}).

	The structure of the article is as follows. In section~\ref{sec:2}, we review the most relevant aspects of the theory of $\omega\mathscr{H}$ manifolds that we need in the forthcoming discussion. In section~\ref{sec:3}, we discuss the notion of St\"ackel lift and, in Theorem \ref{theo:1}, we clarify its relation with the Eisenhart lift. In section~\ref{sec:4}, we elucidate the relationship between St\"ackel lifts and Haantjes geometry, and prove that momentum-dependent Stäckel lifts generate non-projectable Haantjes structures (Theorem \ref{maintheorem}). In section~\ref{sec:5}, we study an iterative procedure allowing us to construct higher-dimensional Hamiltonian models.  Several applications to physically relevant geometries, including gravitational waves and Hamilton-Finsler spaces, are discussed in section~\ref{sec:6}.  The study of a new, non-geodesic Hamiltonian arising from the St\"ackel lift is performed in section~\ref{sec:7}. An extension of the theory to the case of magnetic systems separable in cylindrical coordinates and a discussion of the specific basis transformations needed are proposed in section~\ref{sec:8}. Future research perspectives are discussed in the final section~\ref{sec:9}.

	\section{Preliminaries on the Haantjes geometry}
	\label{sec:2}

	 In this section, we review basic notions concerning the geometry of Nijenhuis or Haantjes torsions, grounded in the foundational works \cite{Haa1955,Nij1951,Nij1955,Nij19552,FN1956}, and their connection to symplectic geometry \cite{TT2021JGP,TT2022AMPA,RTT2022CNS,RTT2024AMPA}. We shall focus only on the aspects of the theory relevant for the subsequent discussion.
	
	Let $M$ be a smooth manifold of dimension $n$, $\mathfrak{X}(M)$ be the Lie algebra of vector fields on $M$, and let $\boldsymbol{K}:\mathfrak{X}(M)\rightarrow \mathfrak{X}(M)$ be a smooth $(1,1)$-tensor field. In the following, all tensors will be considered to be smooth.
	\begin{definition} \label{def:N}
		The
		\textit{Nijenhuis torsion} of $\boldsymbol{K}$ is the vector-valued $2$-form
		\begin{equation} \label{eq:Ntorsion}
			\tau_ {\boldsymbol{K}} (X,Y):=\boldsymbol{K}^2[X,Y] +[\boldsymbol{K}X,\boldsymbol{K}Y]-\boldsymbol{K}\Big([X,\boldsymbol{K}Y]+[\boldsymbol{K}X,Y]\Big),
		\end{equation}
		where $X,Y \in \mathfrak{X}(M)$ and $[ \, \ ]$ denotes the commutator of two vector fields.
	\end{definition}
	\begin{definition} \label{def:H}
		\noi The \textit{Haantjes torsion} of $\boldsymbol{K}$ is the vector-valued $2$-form
		\begin{equation} \label{eq:Haan}
			\mathcal{H}_{\boldsymbol{K}}(X,Y):=\boldsymbol{K}^2 \tau_{\boldsymbol{K}}(X,Y)+\tau_{\boldsymbol{K}}(\boldsymbol{K}X,\boldsymbol{K}Y)-\boldsymbol{K}\Big(\tau_{\boldsymbol{K}}(X,\boldsymbol{K}Y)+\tau_{\boldsymbol{K}}(\boldsymbol{K}X,Y)\Big).
		\end{equation}
	\end{definition}
	
	The Haantjes (Nijenhuis) vanishing condition inspires the following definition.
	\begin{definition}
		A Haantjes (Nijenhuis) operator is a (1,1)-tensor field whose Haantjes (Nijenhuis) torsion identically vanishes.
	\end{definition}

	A simple, relevant example of Haantjes operator is that of a tensor field $\boldsymbol{K}$ which takes a diagonal form in a local chart $\boldsymbol{x}=(x^1,\ldots,x^n)$:
	\begin{equation}
		\boldsymbol{K}(\boldsymbol{x})=\sum _{i=1}^n \lambda_{i }(\boldsymbol{x}) \frac{\partial}{\partial x^i}\otimes \rd x^i, \label{eq:Ldiagonal}
	\end{equation}
	where $\lambda_{i }(\boldsymbol{x}):=\lambda^{i}_{i}(\boldsymbol{x})$ are the eigenvalue fields of $\boldsymbol{K}$ and $\left(\frac{\partial}{\partial x^1},\ldots, \frac{\partial}{\partial x^n}\right) $ are the fields forming the so called \textit{natural frame} associated with the local chart $(x^{1},\ldots,x^{n})$. As is well known, the Haantjes torsion of the diagonal operator \eqref{eq:Ldiagonal} vanishes. 
	
	We also recall that two frames $\{X_1,\ldots,X_n\}$ and $\{Y_1,\ldots,Y_n\}$ are said to be equivalent if $n$ nowhere vanishing smooth
	functions $f_i$ exist, such that
	\begin{equation}
		X_i= f_i(\boldsymbol{x}) Y_i \, , \qquad\qquad i=1,\ldots,n \ .
	\end{equation}
	
	\begin{definition}\label{def:Iframe}
		An \emph{integrable} frame is a reference frame equivalent to a natural frame.
	\end{definition}

	\par
	\noi Many more examples of Haantjes operators, relevant in classical mechanics and in Riemannian geometry, can be found, for instance, in the works \cite{KRTT2024PRA}, \cite{RTT2022CNS}--\cite{RTT2024AMPA} and \cite{TT2021JGP}--\cite{TT2016SIGMA}.
	
	\subsection{Haantjes algebras}

	Another crucial notion is that of Haantjes algebras, which have been introduced in \cite{TT2021JGP} and further studied and generalized in \cite{RTT2023JNS}.

	\begin{definition}\label{def:HM}
		A Haantjes algebra is a pair $(M, \mathscr{H})$ satisfying the following properties:
		\begin{itemize}
			\item
			$M$ is a differentiable manifold of dimension $\mathrm{n}$;
			\item
			$ \mathscr{H}$ is a set of Haantjes operators $\boldsymbol{K}:\mathfrak{X}(M)\to \mathfrak{X}(M)$. Also, they generate:
			\begin{itemize}
				\item
				a free module over the ring of smooth functions on $M$:
				\begin{equation}
					\label{eq:Hmod}
					\mathcal{H}_{\left( f\boldsymbol{K}_{1} +
						g\boldsymbol{K}_2 \right)}(X,Y)= \mathbf{0}
					\, \qquad\forall\, X, Y \in \mathfrak{X}(M) \, \quad \forall\, f,g \in C^\infty(M)\,\quad \forall ~\boldsymbol{K}_1,\boldsymbol{K}_2 \in \mathscr{H};
				\end{equation}
				\item
				a ring with respect to the composition operation
				\begin{equation}
					\label{eq:Hring}
					\mathcal{H}_{(\boldsymbol{K}_1 \, \boldsymbol{K}_2)}(X,Y)=\mathbf{0} \, \qquad
					\forall\, \boldsymbol{K}_1,\boldsymbol{K}_2\in \mathscr{H}, \quad\forall\, X, Y \in \mathfrak{X}(M)\ .
				\end{equation}
			\end{itemize}
		\end{itemize}
		Moreover, if 
		\begin{equation}
			\boldsymbol{K}_1\,\boldsymbol{K}_2=\boldsymbol{K}_2\,\boldsymbol{K}_1 \, \quad\qquad\ \forall~\boldsymbol{K}_1,\boldsymbol{K}_2 \in \mathscr{H}\,
		\end{equation}
		the algebra $(M, \mathscr{H})$ is said to be an Abelian Haantjes algebra. 
	\end{definition}
	
	In essence, we can think of $\mathscr{H}$ as an associative algebra of Haantjes operators.
	
	\begin{remark}
		A fundamental fact concerning Abelian Haantjes algebras is that there exist sets of coordinates, called \textit{Haantjes coordinates}, which allow us to write simultaneously all $\boldsymbol{K}\in \mathscr{H}$ in a block-diagonal form. In particular, if $\mathscr{H}$ is an algebra of semisimple operators, then in Haantjes coordinates all $\boldsymbol{K}\in \mathscr{H}$ acquire a purely diagonal form \cite{TT2021JGP}. 
		
	\end{remark}
	
	\vspace{2mm}

	\subsection{The symplectic-Haantjes manifolds}

	Symplectic-Haantjes (or $\omega \mathscr{H}$) manifolds are symplectic manifolds endowed with a compatible algebra of Haantjes operators \cite{TT2021JGP,TT2022AMPA,TT2022CMP}. Apart from their mathematical properties, these manifolds are relevant because they provide a natural setting in which the theory of Hamiltonian integrable systems can be naturally formulated \cite{KRTT2024PRA,RTT2022CNS,RTT2024AMPA}.

	\begin{definition}\label{def:oHman}
		A symplectic--Haantjes (or $\omega \mathscr{H}$) manifold of class $m$ is a triple $( M,\omega,\mathscr{H})$ which satisfies the following properties:
		\begin{itemize}
			\item[(i)]
			$(M,\omega)$ is a symplectic manifold of dimension $ 2 \, n$;
			\item[(ii)]
			$\mathscr{H}$ is an Abelian Haantjes algebra of rank $m$;
			\item[(iii)]
			$(\omega,\mathscr{H})$ are algebraically compatible, that is
			\begin{equation}
				\omega(X,\boldsymbol{K} Y)=\omega(\boldsymbol{K} X,Y) \qquad \forall \boldsymbol{K} \in \mathscr{H}\ .
			\end{equation}
		\end{itemize}
	\end{definition}
	
We recall the following
	\begin{proposition}\label{lm:aupari} \cite{TT2022AMPA}
		Given a $2n$-dimensional $\omega \mathscr{H}$ manifold $M$, every generalized eigen--distribution $\ker(\boldsymbol{K}-\lambda_{i } \boldsymbol{I})^{{\rho_i}}$, $\rho_i \in \mathbb{N}\backslash\{0\}$, is of even rank.
		Therefore, the geometric and algebraic multiplicities of each eigenvalue $\lambda_{i }(\boldsymbol{x})$ are even.
	\end{proposition}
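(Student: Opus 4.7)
The plan is to reduce the statement to a pointwise claim in each tangent space $(T_pM, \omega_p)$ and exploit only one fact: by compatibility condition (iii) of Definition~\ref{def:oHman}, the linear endomorphism $K_p$ is $\omega_p$-self-adjoint, $\omega_p(K_pX,Y) = \omega_p(X,K_pY)$. Passing to the complexification if necessary, one has the Fitting decomposition $T_pM = \bigoplus_k V_k$ with $V_k := \ker(K_p - l_k(p)I)^{r_k}$. The claim will follow once it is shown that $\omega_p$ restricts nondegenerately to each $V_i$, since a nondegenerate antisymmetric bilinear form on a finite-dimensional vector space exists only in even dimension.

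The key preliminary is that sums, scalar multiples and compositions of $\omega_p$-self-adjoint endomorphisms are $\omega_p$-self-adjoint; hence $P(K_p)$ is $\omega_p$-self-adjoint for every $P \in \mathbb{C}[t]$. In particular,
\[
\omega_p\bigl((K_p-l_iI)^{r_i}X,Y\bigr) = \omega_p\bigl(X,(K_p-l_iI)^{r_i}Y\bigr).
\]
Distinct generalized eigenspaces are then $\omega_p$-orthogonal: for $l_i \neq l_j$, the restriction $(K_p-l_iI)|_{V_j}$ has the single nonzero eigenvalue $l_j-l_i$, so $(K_p-l_iI)^{r_i}$ is a bijection of $V_j$; given $v \in V_i$ and $w \in V_j$, choose $w' \in V_j$ with $(K_p-l_iI)^{r_i}w'=w$, and compute
\[
\omega_p(v,w) = \omega_p\bigl(v,(K_p-l_iI)^{r_i}w'\bigr) = \omega_p\bigl((K_p-l_iI)^{r_i}v,w'\bigr) = 0.
\]

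Finally, if $v \in V_i$ were $\omega_p$-orthogonal to all of $V_i$, by the previous orthogonality it would be $\omega_p$-orthogonal to every $V_k$, hence to $T_pM$; nondegeneracy of $\omega_p$ then gives $v=0$, so $\omega_p|_{V_i}$ is nondegenerate and $\dim V_i$ is even, giving the evenness of the algebraic multiplicity. The main obstacle I anticipate is the geometric-multiplicity statement (the case of small $r$ in the Jordan filtration): the orthogonality argument directly handles only the full generalized eigenspace, and to propagate evenness down to intermediate powers of $(K_p-l_iI)$ one invokes the classical Williamson--Krein-type normal form for $\omega$-symmetric endomorphisms, which forces odd-sized Jordan blocks to appear in pairs so that every filtration step remains even-dimensional. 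Everything else is straightforward linear algebra, and no further use of the Haantjes or Abelian hypotheses of Definition~\ref{def:oHman} is required.
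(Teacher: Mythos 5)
The paper does not actually reproduce a proof of this proposition---it is quoted from \cite{TT2022AMPA}---so your argument can only be assessed on its own terms. Its core (pointwise reduction, $\omega_p$-self-adjointness of $K_p$ and of every polynomial in $K_p$, $\omega_p$-orthogonality of distinct generalized eigenspaces, hence nondegeneracy of $\omega_p$ on each full generalized eigenspace and evenness of the algebraic multiplicities) is correct and is the standard route; you are also right that only condition (iii) of Definition~\ref{def:oHman} and the nondegeneracy of $\omega$ are used.

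The gap is exactly where you flag it, and the normal-form fact you invoke to close it is the wrong one. The condition $\omega(KX,Y)=\omega(X,KY)$ makes $JK$ \emph{antisymmetric} (the skew-Hamiltonian class), not symmetric (the Hamiltonian class), and for this class the canonical form forces \emph{every} Jordan block, of every size, to occur with even multiplicity. The statement you quote---that only the odd-sized blocks pair up---is the one valid for the zero eigenvalue of Hamiltonian matrices, and it would not suffice here anyway: a single Jordan block of size $2$ has a one-dimensional kernel, so if even-sized blocks could occur with odd multiplicity the geometric multiplicity could be odd, contradicting the claim. The elementary way to finish is this: writing $N=(K-l_iI)|_{V_i}$ for the nilpotent part, the pairing induced on the graded piece carrying the blocks of size $s$ is $\beta_s(x,y)=\omega(N^{s-1}x,y)$, and self-adjointness moves $N^{s-1}$ across $\omega$ with no sign, so $\beta_s(y,x)=\omega(N^{s-1}y,x)=-\omega(x,N^{s-1}y)=-\beta_s(x,y)$ for \emph{every} $s$; since $\beta_s$ is nondegenerate on that piece, the number $m_s$ of blocks of size $s$ is even for each $s$, whence $\dim\ker(K-l_iI)^{r}=\sum_s \min(r,s)\,m_s$ is even for every $r$. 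With that replacement your proof closes.
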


	Let us consider the spectral decomposition of the tangent spaces of $M$ given by
	\begin{equation}\label{def:Di}
		T_{\boldsymbol{x}}M=\bigoplus_{i=1}^s \mathcal{D}_i(\boldsymbol{x}), \qquad \qquad
		\mathcal{D}_i(\boldsymbol{x}) :=
		\ker\big(\boldsymbol{K}-\lambda_i\boldsymbol{I}\big)^{\rho_i}(\boldsymbol{x}),
	\end{equation}
	where $\rho_i$ is the Riesz index of the eigenvalue $\lambda_i$. By the previous Proposition, since $rank$ $\mathcal{D}_i\geq 2, \ i=1,\ldots s$, we conclude that the number of distinct eigenvalues of any
	Haantjes operator $\boldsymbol{K}$ of an $\omega\mathscr{H}$ manifold is not greater than $n$.

	\subsection{Haantjes chains}
	\noi We now recall the notion of Haantjes chains \cite{TT2022AMPA}, which will play a central role in our subsequent analysis.
	
	\begin{definition}
		Let $( M,\mathscr{H})$ be a Haantjes algebra of rank $m$. We shall say that a smooth function $H$ generates a Haantjes chain of closed 1-forms of length $m$ if there exists
		a distinguished basis $\{\tilde{\boldsymbol{K}}_1,\ldots, \tilde{\boldsymbol{K}}_m\}$ of $\mathscr{H}$
		such that
		\begin{equation} \label{eq:MHchain}
			\rd (\tilde{\boldsymbol{K}}^T_\alpha \,\rd H )=\boldsymbol{0} \, \quad\qquad \alpha=1,\ldots,m \,
		\end{equation}
		where $\tilde{\boldsymbol{K}}^{T}_{\alpha}: \mathfrak{X}^{ }(M) \to \mathfrak{X}^{ }(M)$ is the transposed operator of $\tilde{\boldsymbol{K}}_{\alpha}$. The (locally) exact 1-forms 
		\begin{equation}
			\rd H_\alpha=\tilde{\boldsymbol{K}}^T_\alpha \,\rd H,
		\end{equation}
		which are supposed to be linearly independent, are said to be the elements of the Haantjes chain of length $\mathrm{m}$ generated by $H$. The functions $H_\alpha$ are the potential functions of the chain.
	\end{definition}

	In order to enquire about the existence of Haantjes chains for an assigned Haantjes algebra, we have to consider the co-distribution, of rank $r\leq m$, generated by a given function $H$ via an arbitrary basis
	$ \{ \boldsymbol{K}_1, \boldsymbol{K}_2,\ldots,\boldsymbol{K}_{m}\} $ of $\mathscr{H}$, i.e.
	\begin{equation} \label{eq:codKH}
		\mathcal{D}_H^\circ:=Span\{ \boldsymbol{K}_1^T \rd H, \boldsymbol{K}_2^T \rd H,\ldots,\boldsymbol{K}_{m}^T\,\rd H\} \,
	\end{equation}
	and the distribution $\mathcal{D}_H$ of the vector fields annihilated by them, which has rank $(n-r)$.
	Note that such distributions do not depend on the particular choice of the basis of $\mathscr{H}$.
	
	The following result clarifies the compatibility between a given $\omega \mathscr{H}$ manifold and a set of functions in involution.
	\begin{corollary}
		A function $H$ generates a Haantjes chain only if it satisfies the invariance condition
		\begin{equation}
			\boldsymbol{K}^T (\mathcal{D}_H^\circ ) \subseteq \mathcal{D}_H^\circ, \qquad\forall \boldsymbol{K} \in \mathscr{H}\ .
		\end{equation}
	\end{corollary}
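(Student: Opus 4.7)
The plan is to observe that the stated invariance follows essentially from the algebraic structure of $\mathscr{H}$ alone, namely that $\mathscr{H}$ is simultaneously a free module over $C^\infty(M)$ and a ring under composition of operators, as stipulated in Definition \ref{def:HM}. In fact, the chain hypothesis on $H$ will not actually enter the argument: the codistribution $\mathcal{D}_H^\circ$ is $\boldsymbol{K}^T$-invariant for every smooth $H$ and every $\boldsymbol{K}\in\mathscr{H}$, so I would interpret the corollary as recording that this algebraic invariance is a necessary background condition for a Haantjes chain to exist at all.

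First, I would take an arbitrary 1-form $\theta \in \mathcal{D}_H^\circ$ and, relative to the fixed basis $\{\boldsymbol{K}_1,\ldots,\boldsymbol{K}_m\}$ of $\mathscr{H}$ used in \eqref{eq:codKH}, expand $\theta=\sum_{i=1}^{m} f_i\,\boldsymbol{K}_i^T\, \rd H$ with $f_i\in C^\infty(M)$. Applying $\boldsymbol{K}^T$ term by term and using the elementary identity $\boldsymbol{K}^T\boldsymbol{K}_i^T=(\boldsymbol{K}_i\boldsymbol{K})^T$ for the transpose of a composition of $(1,1)$-tensors, I obtain $\boldsymbol{K}^T\theta=\sum_i f_i\,(\boldsymbol{K}_i\boldsymbol{K})^T\, \rd H$.

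Next, the ring property \eqref{eq:Hring} guarantees $\boldsymbol{K}_i\boldsymbol{K}\in\mathscr{H}$ for each $i$, and the free-module property \eqref{eq:Hmod} then yields smooth coefficients $c_{ij}(\boldsymbol{x})$ such that $\boldsymbol{K}_i\boldsymbol{K}=\sum_j c_{ij}\boldsymbol{K}_j$. Substituting gives
$$\boldsymbol{K}^T\theta=\sum_{i,j} f_i\, c_{ij}\, \boldsymbol{K}_j^T\, \rd H\;\in\;\mathcal{D}_H^\circ,$$
which is the desired invariance. The basis-independence of $\mathcal{D}_H^\circ$ noted after \eqref{eq:codKH} is a corollary of exactly the same module/ring identities, applied to a change-of-basis matrix inside $\mathscr{H}$.

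The only step that needs genuine care is the bookkeeping of transposes: one must keep the composition order straight so that $(\boldsymbol{K}_i\boldsymbol{K})^T=\boldsymbol{K}^T\boldsymbol{K}_i^T$ and not $\boldsymbol{K}_i^T\boldsymbol{K}^T$. Beyond that, I do not anticipate any substantive obstacle; the corollary is essentially a direct unpacking of Definition \ref{def:HM}, and the interpretive point worth emphasising is that the necessity claimed in the statement is purely algebraic and does \emph{not} appeal to the closedness of the forms $\tilde{\boldsymbol{K}}_\alpha^T\, \rd H$ that defines the chain.
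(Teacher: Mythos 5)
Your argument is correct, and it is essentially the intended one: the paper states this corollary without proof (it is quoted from \cite{TT2022AMPA}), and the underlying computation is exactly the one you give, namely $\boldsymbol{K}^T\boldsymbol{K}_i^T\,\rd H=(\boldsymbol{K}_i\boldsymbol{K})^T\rd H$ combined with the module and ring structure of $\mathscr{H}$. The one step you should make fully explicit is the existence of the coefficients $c_{ij}$: conditions \eqref{eq:Hmod}--\eqref{eq:Hring} as literally written only assert that $f\boldsymbol{K}_1+g\boldsymbol{K}_2$ and $\boldsymbol{K}_1\boldsymbol{K}_2$ have vanishing Haantjes torsion, whereas your expansion $\boldsymbol{K}_i\boldsymbol{K}=\sum_j c_{ij}\boldsymbol{K}_j$ requires the stronger reading that the product actually lies in the rank-$m$ free module spanned by $\{\boldsymbol{K}_1,\ldots,\boldsymbol{K}_m\}$. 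That reading is the one the paper adopts (it describes $\mathscr{H}$ as ``an associative algebra of Haantjes operators'' and speaks of a ``basis of $\mathscr{H}$''), but if $\mathscr{H}$ were merely a generating set whose products are Haantjes operators outside the module, the argument would break at precisely this point, so the hypothesis deserves a sentence. Your interpretive remark is also accurate and worth retaining: under this reading the invariance of $\mathcal{D}_H^\circ$ defined in \eqref{eq:codKH} holds for every smooth $H$, so the corollary records a structural compatibility that is automatic rather than a restriction that distinguishes chain generators, and the closedness of the forms $\tilde{\boldsymbol{K}}_\alpha^T\,\rd H$ is indeed never used.
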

	
	\subsection{Haantjes coordinates}
A crucial result for the separability theory of Hamiltonian systems is the theorem proved in \cite{TT2021JGP}, which ensures that an algebra of Haantjes operators attains block-diagonal form in suitable coordinates.

	\begin{proposition} [Proposition 33, \cite{TT2021JGP}] \label{th:HJ}
		Let $\mathcal{K}=\{\boldsymbol{K}_1,\ldots,\boldsymbol{K}_m\}$, $\boldsymbol{K}_\alpha: \mathfrak{X}(M)\to \mathfrak{X}(M)$, $\alpha=1,\ldots,m$ be a family of commuting operator fields; we assume that one of them, say $\boldsymbol{K}_1$, is a Haantjes operator. Then, there exist local charts adapted to the spectral decomposition of $\bs{K}_1$ in which all of the operators $\boldsymbol{K}_\alpha$ can be written simultaneously in a block-diagonal form. In addition, if we assume that
		
		(i) all the operators of the family are Haantjes operators
		
		(ii) all possible nontrivial intersections  of their generalized eigen-distributions
	\begin{equation}
		\label{eq:Va}
		\mathcal{V}_a(\boldsymbol{x}):= \mathcal{D}_{i_1}^{(1)}(\boldsymbol{x}) \bigcap \ldots \bigcap \mathcal{D}_{i_m}^{(m)}(\boldsymbol{x}) \,, \qquad a=1,\ldots, v\leq n,
		\end{equation}
		are integrable, as well as their direct sums, then there exist sets of local coordinates, adapted to the decomposition
		\begin{equation}
			\label{eq:TVa}
			T_{\boldsymbol{x}}M= \bigoplus_{a=1}^{v}\mathcal{V}_a( \boldsymbol{x}), \qquad \boldsymbol{x} \in M,
		\end{equation}
		in which all operators $\boldsymbol{K}_\alpha$ admit simultaneously a block-diagonal form (with possibly finer blocks).
	\end{proposition}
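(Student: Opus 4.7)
The plan is to build the coordinates in two stages: first adapting to the spectral decomposition of $\boldsymbol{K}_1$ alone, and then, under the additional assumptions (i)--(ii), refining to the common decomposition $T_xM=\bigoplus_a \mathcal{V}_a$.

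First, I would invoke the classical Nijenhuis--Haantjes theorem applied to $\boldsymbol{K}_1$: since $\boldsymbol{K}_1$ is a Haantjes operator, each generalized eigen-distribution $\mathcal{D}_i^{(1)}=\ker(\boldsymbol{K}_1-l_i\boldsymbol{I})^{\rho_i}$ is smooth and integrable, and every direct sum of them is integrable as well. A standard Frobenius argument then provides local coordinates adapted to $T_x M=\bigoplus_i \mathcal{D}_i^{(1)}$, in which $\boldsymbol{K}_1$ takes block-diagonal form with blocks of sizes $\rank \mathcal{D}_i^{(1)}$. Second, I would use commutativity to show that all the other operators respect this decomposition: for $X\in \mathcal{D}_i^{(1)}$ one computes $(\boldsymbol{K}_1-l_i\boldsymbol{I})^{\rho_i}\boldsymbol{K}_\alpha X=\boldsymbol{K}_\alpha(\boldsymbol{K}_1-l_i\boldsymbol{I})^{\rho_i}X=0$, so $\boldsymbol{K}_\alpha \mathcal{D}_i^{(1)}\subseteq \mathcal{D}_i^{(1)}$ for every $\alpha$. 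In the coordinates just built for $\boldsymbol{K}_1$, this pointwise invariance forces each $\boldsymbol{K}_\alpha$ to be simultaneously block-diagonal with the same block pattern, which settles the first part of the statement.

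For the second part, I would refine the construction. Under assumption (i) each $\boldsymbol{K}_\alpha$ individually has integrable generalized eigen-distributions $\mathcal{D}_{i_\alpha}^{(\alpha)}$, and the same commutator identity shows that $\boldsymbol{K}_\beta \mathcal{D}_{i_\alpha}^{(\alpha)}\subseteq \mathcal{D}_{i_\alpha}^{(\alpha)}$ for all indices. Consequently every intersection $\mathcal{V}_a$ defined in \eqref{eq:Va} is invariant under the whole family. Assumption (ii) states exactly that arbitrary direct sums of the $\mathcal{V}_a$ are integrable, so a second application of Frobenius yields local coordinates adapted to \eqref{eq:TVa}; in such coordinates every $\boldsymbol{K}_\alpha$ preserves each $\mathcal{V}_a$ and is therefore simultaneously block-diagonal with blocks potentially finer than those obtained in the first stage.

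The main obstacle is the refined Frobenius step. The separate integrabilities of the families $\{\mathcal{D}_i^{(\alpha)}\}_\alpha$, granted by the Haantjes property of each $\boldsymbol{K}_\alpha$, are not automatically sufficient to produce a single chart adapted to all their intersections: one must be able to choose, inside every block already provided by $\boldsymbol{K}_1$, internal coordinates compatible with the further splittings induced by the other $\boldsymbol{K}_\alpha$'s. This is precisely where hypothesis (ii) is indispensable, ensuring that the $\mathcal{V}_a$ assemble into a common foliation of the ambient manifold.
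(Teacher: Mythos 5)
The paper does not reprove this statement---it is quoted verbatim from Proposition 33 of \cite{TT2021JGP}---but your argument reproduces the proof given there: Haantjes' theorem applied to $\boldsymbol{K}_1$ yields coordinates adapted to $\bigoplus_i\mathcal{D}_i^{(1)}$, the pointwise identity $(\boldsymbol{K}_1-l_i\boldsymbol{I})^{\rho_i}\boldsymbol{K}_\alpha=\boldsymbol{K}_\alpha(\boldsymbol{K}_1-l_i\boldsymbol{I})^{\rho_i}$ forces invariance and hence simultaneous block-diagonality, and hypothesis (ii) supplies exactly the mutual integrability of the $\mathcal{V}_a$ needed for the refined Frobenius step. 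Your proposal is correct and takes essentially the same route as the source.
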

	
	\begin{remark}
		Each set of coordinates allowing us to write down the family $\mathcal{K}$ in a block-diagonal form will be said to be a set of \textit{Haantjes coordinates}. In particular, if we deal with a semisimple Haantjes algebra, then each set of Haantjes coordinates allows us to write simultaneously all the operators of the algebra in a diagonal form.
	\end{remark}
	\begin{definition}
		Given an $\omega \mathscr{H}$ manifold, the local coordinates in which all Haantjes operators take simultaneously a block-diagonal form and the symplectic form assumes the Darboux form are called Darboux--Haantjes (DH) coordinates.
	\end{definition}

	\section{The St\"ackel lift as a generalization of the Eisenhart lift} \label{sec:3}

In this section, we introduce and discuss the notion of St\"ackel lift. To this aim, we first provide a concise review of the classical Eisenhart and Stäckel frameworks.

	\begin{remark}
		We adopt the following notation: we shall denote by $(q^{i}, p_i)$,  $i=1,\ldots,n$ a generic set of canonically conjugate Darboux coordinates on a $2n$-dimensional symplectic manifold. For Cartesian and polar coordinates, we will keep the standard notations. The new variables introduced in the Eisenhart lifting procedure will be denoted by $(u, p_u)$; for the Lorentzian case, we will use $(t,u)$ and $(p_t, p_u)$.
	\end{remark}
	
	\subsection{The Eisenhart lift} \label{sec:3.1}

	We start by reviewing the Lorentzian and Riemannian Eisenhart lifts \cite{E1928AM,DBKP1985PRD,DGH1991PRD,CKA2015EJP}, which provide a geometric framework for embedding mechanical systems into higher-dimensional spaces as geodesic flows.

	For simplicity, we formulate the theory in the Hamiltonian setting. We consider a symplectic manifold $(M, \omega)$ with symplectic form $\omega= dp_i\wedge dq^{i}$ and a Hamiltonian that is a polynomial of degree 2 in the momenta:
	\begin{equation}
		H= H^{(2)}+H^{(1)}+H^{(0)}.
	\end{equation}

	We focus on the case where
	\begin{equation} \label{eq:3.1}
		H= \frac{1}{2}\sum_{i,j=1}^{n} g^{ij}(\bs{q},t) (p_i -e A_i)(p_j-e A_j) + e^2 V(\bs{q},t),
	\end{equation}
	which includes kinetic energy (quadratic in momenta), a vector potential $\bs{A}$, and a scalar potential $V$.

	To lift this system, we treat time $t$ as a configuration coordinate and introduce a new coordinate $u$. The new momenta $p_u, p_t$ are conjugate to $(u,t)$, and the extended symplectic form becomes
	\begin{equation}
		\omega'=\omega+ dp_t \wedge  dt + dp_u \wedge du.
	\end{equation}
	The lifted Hamiltonian is constructed as
	\begin{equation}
		H^{L}= H^{(2)}+p_u H^{(1)}+ p_u^2 H^{(0)}+ p_t p_u,
	\end{equation}
	which is homogeneous of degree two in momenta. In explicit form,
	\begin{equation} \label{eq:Lor lift}
		H^L= \frac{1}{2}\sum_{i,j=1}^{n} g^{ij}(\bs{q},t) (p_i -p_u A_i)(p_j- p_u A_j) + p_u^2 V(\bs{q},t)+ p_t p_u.
	\end{equation}

	This lifted Hamiltonian generates the geodesic flow of the pseudo-Riemannian metric
	\begin{equation} \label{eq:3.3}
		ds^2=\sum_{i,j=1}^{n} g_{ij}(\bs{q},t)dq^{i} d q^{j}-2V(\bs{q},t)dt^2+2\sum_{i=1}^{n} A_i(\bs{q},t)dq^i dt+2 dt du.
	\end{equation}
	(Here $g^{\mu\nu}g_{\nu\rho}=\delta^\mu_\rho$.) The metric form \eqref{eq:3.3} is a Bargmann–Eisenhart wave \cite{BM2013PRD}.

	The lifting procedure preserves integrability: if $H$ possesses polynomial integrals of motion, the lifted integrals remain constants of motion for $H^L$. 

The momentum $p_u$ is conserved, and the lifted system inherits the integrability properties of the original one. The original system is recovered as the symplectic reduction of the lifted system, fixing the value of its conserved momentum to the charge, $p_u-m=0$; this is the content of the Eisenhart--Lichnerowicz theorem \cite{E1928AM,Lich1955}.
		
	\subsection{Preliminaries on the St\"ackel geometry}\label{sec:3.2}
	Let $(H_1,\ldots, H_n)$ be an integrable Hamiltonian system and assume that all the Hamiltonian functions $H_j$ are quadratic in the momenta 
	\begin{equation} \label{}
		H_j=\frac{1}{2} \sum_{k=1}^n g_{j}^{kk}(\boldsymbol{q}) p_k^2 +V_j(\boldsymbol{q}).
	\end{equation}
	Thus, one can adopt the standard theory of separability due to St\"ackel \cite{St1893CRAS,Pere}. As is well known, St\"ackel's approach provides us with a characterization of \textit{totally separable systems} by means of an invertible \textit{St\"ackel matrix} ${S}$ and \textit{St\"ackel vector} $\bs{W}$, of the form
	\begin{equation} \label{eq:SM}
		S^{(n)}(\bs{q})= 
		\begin{bmatrix}
			S_{11}(q^1)& \cdots & S_{1n}(q^1) \\ 
			\vdots &\ddots & \vdots \\
			S_{n1}(q^n) & \cdots & S_{nn}(q^n)
		\end{bmatrix} \, ,
		\qquad\qquad 
		\bs{W}= 
		\begin{bmatrix}
			W_{1}(q^1) \\ 
			\vdots \\
			W_{n}(q^n) 
		\end{bmatrix} \,,
	\end{equation} 
	where $S_{ij}$ and $W_{i}$ are arbitrary functions of their separated arguments that satisfy the conditions
	\begin{equation}\label{eq:stackel_data}
		g_{j}^{kk}(\boldsymbol{q})=[(S^{(n)})^{-1}]_{j}^{k}(\bs{q}) \, , \qquad\qquad V_{j}(\bs{q})=\sum_{i=1}^n g^{ii}_j(\boldsymbol{q})\, W_i(q^i) \ .
	\end{equation}
	Thus, we have
	\begin{equation} \label{eq:1.9}
		H_j=\sum_{k=1}^n [(S^{(n)})^{-1}]_{j}^{k}(\boldsymbol{q})\Big( \frac{p_k^2}{2} +W_{k}(q^k)\Big) \,.
	\end{equation}

	In \cite{AKN1997}, Arnold, Kozlov and Neishtadt (AKN) proposed a generalization of St\"ackel systems by considering the case of arbitrary, not necessarily quadratic, Hamiltonian functions defined in a symplectic manifold. Precisely, given a St\"ackel matrix $\bs{S}$, their construction consists in the family of Hamiltonians
	\begin{equation}
		H_{j} = \sum_{k=1}^{n} \dfrac{\tilde{S}_{j}^{k}}{\det S^{(n)}} \, f_{k}(q^k,p_k), \qquad j=1,\ldots, n,
		\label{AKN}
	\end{equation}
	where $\tilde{S}_{j}^{k}$ is the cofactor associated to the element $S_{k}^{j}$, and $f_{k}$ are arbitrary functions forming a \emph{generalized St\"ackel vector} $\boldsymbol{f}$. It can be shown that the functions $H_j$, defined by Eqs. \eqref{AKN}, are in involution with each other; they represent a totally separable Hamiltonian system.
	
	\subsection{The St\"ackel lift}\label{sec:3.3}

	\begin{definition} Let $Q$ be an $n$-dimensional manifold, with local coordinates $(q^{1},\ldots,q^{n})$. Define $\tilde{Q}= Q\times\mathbb{R}$, where $q^{n+1}$ denotes the coordinate on $\mathbb{R}$, and let $T^{*}\tilde{Q}$ be its cotangent bundle.
		Let $S^{(n)}(\bs{q})$ be an $(n\times n)$ St\"ackel matrix.
		The Stäckel lift of this matrix is defined to be the matrix
		\begin{equation} \label{eq:LSM}
			\mathcal{L}^{(n+1)}(\bs{q}, \bs{p}):= \left[\begin{array}{ccc|ccc}
				S_{11}(q^1)& \cdots & S_{1n}(q^1) & S_{1,n+1}(q^1,p_1) \\ 
				\vdots &\ddots & \vdots & \vdots \\
				S_{n1}(q^n) & \cdots & S_{nn}(q^n) &S_{n,n+1}(q^n,p_n)\\ \hline
				0 & \cdots & 0 & S_{n+1,n+1}(q^{n+1},p_{n+1})
			\end{array}\right].
		\end{equation}
		Here $S_{1,n+1}(q^{1},p_1),\ldots,S_{n+1,n+1}(q^{n+1},p_{n+1})$ are arbitrary smooth functions, with $S_{n+1,n+1}(q^{n+1},p_{n+1})\neq 0$. 
	\end{definition}
	\begin{remark}
	In the following, $(T^{*}\tilde{Q}, \tilde{\omega})$ will be regarded as a symplectic manifold endowed with the canonical symplectic form $\tilde{\omega}= \sum_{i=1}^{n+1} dp_i\wedge dq^i$.
	\end{remark}
	\begin{remark}
		The lifted matrix $\mathcal{L}^{(n+1)}(\bs{q}, \bs{p})$ depends explicitly, in its last column, on the momenta $\bs{p}=(p_1,\ldots,p_{n+1})$. In Remark 6 of \cite{RTT2024AMPA} it was observed that, interestingly enough, even when each row $a$ of the Stäckel matrix $\boldsymbol{S}$ depends on $(q^a, p_a)$, we can still construct a separable Hamiltonian system in involution. This fact also holds for partially separable systems. This observation will be crucial in the subsequent analysis.
	\end{remark}
	
	\begin{remark}
		In the case when no dependence on the momenta is allowed, the lifted matrix $\mathcal{L}^{(n+1)}(\bs{q}, \bs{p})$ is by construction a standard St\"ackel matrix on an $(n+1)$-dimensional configuration space. There are many other possible choices of lifting a St\"ackel matrix by preserving its St\"ackel character; the present one is perhaps the simplest one, and is sufficient to describe many interesting cases.
	\end{remark}
	
	\subsection{Integrable Hamiltonian systems from the St\"ackel lift}
	Let $(H_1,\ldots,H_n)$ be an integrable Hamiltonian system, separable in the coordinate system $\bs{q}:=(q^1,\ldots,q^n)$, and $S^{(n)}(\bs{q})$ be its associated St\"ackel matrix.
	The lifted-St\"ackel Hamiltonian system on $T^{*}\tilde{Q}$ associated with the matrix \eqref{eq:LSM} reads

	\begin{equation} \label{eq:LSH}
		\begin{bmatrix}
			H_1 \\
			\vdots \\
			H_{n+1}
		\end{bmatrix}
		=
		(\mathcal{L}^{(n+1)})^{-1}(\bs{q}, \bs{p})
		\begin{bmatrix}
			f_1 (q^1,p_1)\\
			\vdots \\
			f_{n+1}(q^{n+1},p_{n+1})
		\end{bmatrix}.
	\end{equation}
	
	 The general form of the time-independent Hamilton-Jacobi equations is
\begin{equation*}
H_i \left( q^{1}, ..., q^{n+1}, p_{1} = \dfrac{\partial W}{\partial q^{1}}, ..., p_{n+1} = \dfrac{\partial W}{\partial q^{n+1}} \right) = h_i, \qquad i=1,\ldots,n+1
\end{equation*}
where $h_i$ are arbitrary constants, and the complete integral $W$ satisfies the condition 
$
\det \bigg[ \frac{\partial^2 W}{\partial q^i\partial h_j} \bigg]\neq 0 \ 
$. The HJ equations  split into the system of separation equations
	\begin{equation} \label{eq:5.10}
		f_i\left(q^i, \frac{\partial W} {\partial q^i}\right)-\langle \mathcal{L}^{(n+1)}_i~|~\bs{h}~\rangle=0 \qquad \qquad i=1,\ldots,n+1 \,
	\end{equation}
	where $\mathcal{L}^{(n+1)}_i$ is the $i$-th row of the lifting matrix $\mathcal{L}^{(n+1)}(\bs{q},\bs{p})$, and $\bs{h}=[h_1,\ldots,h_{n+1}]^{T}$, is a vector of arbitrary constants.

	\subsubsection{Lifted Systems separable in 3D}
	For concreteness, we illustrate the Stäckel lift for a separated Hamiltonian system with $n=2$ and Stäckel matrix $S^{(2)}(\bs{q})$. Let us consider the system
	\begin{equation} \label{eq:GSS3}
		\begin{bmatrix}
			H_1 \\
			H_2\\
			H_3
		\end{bmatrix}
		=(\mathcal{L}^{(3)})^{-1}(\bs{q}, \bs{p})
		\begin{bmatrix}
			f_1 (q^{1}, p_{1})\\
			f_2(q^{2}, p_{2})\\
			f_3(q^{3}, p_{3})
		\end{bmatrix}\,,
	\end{equation}
	where $f_{1}$, $f_{2}$ and $f_{3}$ are arbitrary functions and the lifted Stäckel matrix \eqref{eq:LSM} takes the form
	\begin{equation} \label{eq:3.12}
		\mathcal{L}^{(3)}(\bs{q}, \bs{p}) =
		\begin{bmatrix}
			S_{11} (q^{1}) & S_{12} (q^{1}) & S_{13} (q^{1},p_1) \\
			S_{21} (q^{2}) & S_{22} (q^{2}) & S_{23} (q^{2},p_2) \\
			0 & 0 & S_{33} (q^{3},p_3) 
		\end{bmatrix},
	\end{equation}
	with $S_{33}(q^3,p_3)\neq 0$ to ensure that $\det \mathcal{L}^{(3)}\neq 0$. From eq. \eqref{eq:GSS3}, one can obtain a large family of Hamiltonian systems $(H_1, H_2, H_3)$ in involution. It explicitly reads
	\begin{equation}
		\begin{split}
			H_{1} &= \dfrac{1}{\det \mathcal{L}^{(3)}} \left[ S_{22} S_{33} f_{1} - S_{12} S_{33} f_{2} + ( S_{12} S_{23} - S_{13} S_{22} ) f_{3} \right], \\
			H_{2} &= \dfrac{1}{\det \mathcal{L}^{(3)}} \left[ - S_{21} S_{33} f_{1} + S_{11} S_{33} f_{2} + ( S_{13} S_{21} - S_{11} S_{23} ) f_{3} \right], \\
			H_{3} &= \dfrac{1}{\det \mathcal{L}^{(3)}} \left[ ( S_{11} S_{22} - S_{12} S_{21} ) f_{3} \right].
		\end{split}
	\end{equation}
	From the previous relations we obtain the set of three associated separation equations: 
	\begin{equation}
		\begin{cases}
			f_1\left(q^1, \frac{\partial W} {\partial q^1} \right)- S_{11} h_1-S_{12} h_2-S_{13}h_3=0, \\
			f_2\left(q^2, \frac{\partial W} {\partial q^2}\right)- S_{21} h_1-S_{22}h_2-S_{23}h_3 =0,\\
			f_3\left(q^3, \frac{\partial W} {\partial q^3}\right)- S_{33}h_3=0.
		\end{cases}
	\end{equation}

	\begin{remark}\label{rmk momentum dep}
		Equation \eqref{eq:LSH} defines Hamiltonian systems that depend on momenta not only through the St\"ackel functions $f_k(q^k, p_k)$, but also through the coefficients of the inverse of the momentum-dependent matrix \eqref{eq:LSM}. From a geometrical point of view, this corresponds to imposing a momentum-dependent metric on $T^*\tilde{Q}$. These are known as Hamilton and generalized Hamilton spaces \cite{Miron2001}; see, e.g., \cite{RSS2024} for a recent mathematical treatment and further references. Generally speaking, one can prove that the Hamiltonians defined by means of the St\"ackel lift are functionally independent in a dense, open subset of the extended phase space.
	\end{remark}
	
	\subsection{The Riemannian Eisenhart lift as a St\"ackel lift}
	As a first example of the previous approach, we discuss the standard Eisenhart lift, in its Riemannian version, for a class of both integrable and maximally superintegrable models in 2D. 
Precisely, we will show that the Eisenhart lifting procedure is equivalent to a St\"ackel lift of a suitable St\"ackel matrix defined on the configuration space, according to the following result.

\begin{theorem}[Stäckel lift and Riemannian Eisenhart lift]\label{theo:1}
	Let
	\begin{equation}\label{eq:H}
		H_j = \frac{1}{2}\sum_{i=1}^n g^{ii}_j(\boldsymbol{q})\, p_i^2 + V_j(\boldsymbol{q}), \qquad j=1,\ldots,n,
	\end{equation}
	be a Stäckel-separable completely integrable Hamiltonian system on $T^{*}Q$, separable in a coordinate chart
	$(q^1,\ldots,q^n)$ with associated Stäckel matrix $S^{(n)}$.
	Let the contravariant metric and Stäckel potentials be given by eqs.
	\eqref{eq:stackel_data}.
	Consider the lifting matrix
	\begin{equation}\label{eq:lifting-matrix}
		\mathcal{L}^{(n+1)}(\boldsymbol{q})=
		\begin{pmatrix}
			S^{(n)}(\boldsymbol{q}) & -\boldsymbol{W} \\
			\boldsymbol{0}^T & 1
		\end{pmatrix},
		\qquad
		\boldsymbol{W}=(W_1(q^1),\ldots,W_n(q^n))^T .
	\end{equation}
	Then, choosing the Stäckel functions $f_i = \frac{1}{2}p_i^2$ (for $i=1,\ldots,n$) and $f_{n+1} = p_u^2$, the Stäckel lift generated by $\mathcal{L}^{(n+1)}$ yields the independent lifted Hamiltonians on $T^{*}\tilde{Q}$, making the system completely integrable: 
	\begin{equation}\label{eq:lifted-H}
		\tilde H_j
		= \frac{1}{2}\sum_{i=1}^n g^{ii}_j(\boldsymbol{q})\, p_i^2
		+ V_j(\boldsymbol{q})\, p_u^2, \quad j=1,\ldots,n,\qquad \tilde{H}_{n+1}=p_u^2.
	\end{equation}
	
	
In particular, $\tilde H_1$ is the geodesic Hamiltonian of the Riemannian Eisenhart lift of the natural system with metric $g_{ij}$ and potential $V_1$ (Riemannian wherever $V_1>0$), while the remaining $\tilde H_j$ are higher Stäckel integrals corresponding to Killing tensors of the same metric.
\end{theorem}

\begin{proof}
	For Stäckel-separable systems, the contravariant metric and potentials admit
	the representations
	\begin{equation}\label{eq:metric-stackel}
		g^{kk}_j(\boldsymbol{q})=\bigl[(S^{(n)})^{-1}\bigr]_{j}^{k},
	\end{equation}
	and
	\begin{equation}\label{eq:potential-stackel}
		V_j(\boldsymbol{q})=\sum_{i=1}^n g^{ii}_j(\boldsymbol{q})\, W_i(q^i),
	\end{equation}
	see \cite{Benenti1997}.
	
	Since $S^{(n)}$ is invertible, the inverse of the lifting matrix
	\eqref{eq:lifting-matrix} is obtained via the Schur complement.
	For a block matrix
	\begin{equation}\label{eq:block-matrix}
		M=\begin{pmatrix} A & B \\ C & D \end{pmatrix},
	\end{equation}
	where $A$ is invertible and whose Schur complement
	\begin{equation}\label{eq:schur-complement}
		S_D = D - C A^{-1} B
	\end{equation}
	is also invertible, the inverse of $M$ is given by
	\begin{equation}\label{eq:schur-inverse}
		M^{-1}
		=
		\begin{pmatrix}
			A^{-1} + A^{-1} B S_D^{-1} C A^{-1}
			& -A^{-1} B S_D^{-1} \\
			- S_D^{-1} C A^{-1}
			& S_D^{-1}
		\end{pmatrix}.
	\end{equation}
	
	In the present case $A=S^{(n)}$, $B=-\boldsymbol{W}$,
	$C=\boldsymbol{0}^T$, and $D=1$, so that $S_D=1$. Substitution into
	\eqref{eq:schur-inverse} yields
	\begin{equation}\label{eq:lifted-inverse}
		[\mathcal{L}^{(n+1)}]^{-1}
		=
		\begin{pmatrix}
			(S^{(n)})^{-1} & (S^{(n)})^{-1}\boldsymbol{W} \\
			\boldsymbol{0}^T & 1
		\end{pmatrix}.
	\end{equation}
	
	The Stäckel lift defines a separable Hamiltonian system on $T^*\tilde Q$,
	$\dim\tilde Q=n+1$. The $j$-th lifted Hamiltonian is
	\begin{equation}\label{eq:H-tilde-def}
		\tilde H_j
		= \frac{1}{2}\sum_{i=1}^n [\mathcal{L}^{-1}]_{ji}\, p_i^2
		+ [\mathcal{L}^{-1}]_{j,n+1}\, p_u^2 .
	\end{equation}
	
	From \eqref{eq:lifted-inverse} and \eqref{eq:metric-stackel} we have
	\begin{equation}
		[\mathcal{L}^{-1}]_{ji}=g^{ii}_j(\boldsymbol{q}),
	\end{equation}
	while using \eqref{eq:potential-stackel} gives
	\begin{equation}
		[\mathcal{L}^{-1}]_{j,n+1}
		= \sum_{k=1}^n g^{kk}_j(\boldsymbol{q})\, W_k(q^k)
		= V_j(\boldsymbol{q}).
	\end{equation}
	Substituting into \eqref{eq:H-tilde-def} yields \eqref{eq:lifted-H}, completing
	the proof.
\end{proof}

\subsection{The St\"ackel lift of Smorodinsky-Winternitz systems}
As an application of Theorem \ref{theo:1}, we shall revise the natural Hamiltonian system, separable in Cartesian coordinates
	\begin{equation} \label{eq:3.18}
		\begin{split}
			H_1 &= \frac{1}{2}\big(p_1^2+p_2^2+V_1(x^1)+V_2(x^2)\big), \\
			H_2 &= p_1^2+V_1(x^1).
		\end{split}
	\end{equation}The St\"ackel-lifted matrix
	\begin{equation} \label{Stackel 1}
		\mathcal{L}^{-1}= 
		\begin{bmatrix}
			1 & 1 & V_1(x^1)+V_2(x^2) \\ 
			2 & 0 & 2 V_1(x^1) \\ 
			0 & 0 & 1
		\end{bmatrix}, \qquad \mathcal{L}= 
		\begin{bmatrix}
			0 & 1/2 & -V_1(x^1) \\ 
			1 & -1/2 & -V_2(x^2) \\ 
			0 & 0 & 1
		\end{bmatrix},
	\end{equation}
	allows us to define the Eisenhart lift of the family of systems \eqref{eq:3.18}. Here the St\"ackel functions in \eqref{AKN} are just quadratic terms in momenta multiplied by a constant. We obtain
	\begin{equation} \label{eq:Lor lift1}
		\begin{split}
			H_1 &= \frac{1}{2}\Big(p_1^2+p_2^2+(V_1(x^1)+V_2(x^2))p_u^2\Big), \\
			H_2 &= p_1^2+\Big(V_1(x^1)\Big)p_u^2,\\
			H_3 &= p_2^2+\Big(V_2(x^2)\Big)p_u^2,\\
			H_4 &= \frac{1}{2}p_u^2,
		\end{split}
	\end{equation}
	where the integral $H_3=2 H_1-H_2$. 
The class of systems \eqref{eq:Lor lift1} contains the Eisenhart lifts of the Smorodinsky-Winternitz systems I and II, separating in Cartesian coordinates, which have been obtained by Cari\~nena et al.  in \cite{Carinena2017}. A completely analogous analysis can be performed for the Smorodinsky-Winternitz systems III, separating in polar and parabolic coordinates, and IV, which separates in two different parabolic coordinate systems.

\begin{remark}[Conformal Eisenhart lifts and position-dependent mass]\label{rmk:conformal}
	The Stäckel framework naturally extends to conformal generalizations of the Eisenhart lift studied by Cariñena, Herranz, and Rañada \cite{Carinena2017}. Consider the generalized lifted matrix
	\begin{equation}
		\mathcal{L}^{(n+1)}(\mathbf{q}, u) = \begin{pmatrix}
			S^{(n)}(\mathbf{q}) & -\mathbf{W}(\mathbf{q}) \\
			\mathbf{C}(\mathbf{q})^T & 1
		\end{pmatrix},
	\end{equation}
	where $\mathbf{C}(\mathbf{q})$ is a suitable $n$-dimensional vector and $\boldsymbol{W}$ is the St\"ackel vector \eqref{eq:stackel_data}, ensuring invertibility of $\mathcal{L}^{(n+1)}$.

	Using the expression \eqref{eq:schur-inverse} for the matrix inverse obtained via the Schur complement, and choosing $\bs{C}$ conveniently, one obtains for the first lifted Hamiltonian the form
	\begin{equation}
		\tilde{H}_1
		= \frac{1}{\Omega(\boldsymbol{q})}
		\left(
		\frac{1}{2}\sum_{i=1}^n g^{ii}(q^i)\, p_i^2
		+ V(q)\, p_u^2
		\right),
	\end{equation}
	i.e., a conformal Riemannian Eisenhart lift with a position-dependent conformal
	factor $\Omega$ corresponding to the Schur complement
	$\Omega(\boldsymbol{q}) = 1 + \boldsymbol{C}(\boldsymbol{q})^{T}
	\bigl(S^{(n)}(\boldsymbol{q})\bigr)^{-1}\boldsymbol{W}(\boldsymbol{q})$.

	For example, with $n=2$ and
	\begin{equation}
		\mathcal{L} = \begin{pmatrix} 
			0 & \frac{1}{2} & -V_1(x) \\
			1 & -\frac{1}{2} & -V_2(y) \\
			C & 0 & 1 
		\end{pmatrix},
	\end{equation}
	the conformal factor becomes $\Omega = 1 + C[V_1(x) + V_2(y)]$, coupling all three coordinates. Setting $V_i = \frac{\omega^2}{2}(x^i)^2$ and $C = -\lambda$ recovers the Darboux III oscillator with $\Omega \propto 1 - \lambda r^2$ (polar coordinates), while an analogous construction in parabolic-cylindrical coordinates with Coulomb-type potentials and $C = -\lambda$ yields Taub-NUT related factors $\Omega \propto 1 - \lambda/r$.
	
	Thus, the Stäckel framework encodes both standard Eisenhart lifts (Theorem \ref{theo:1}) and their conformal generalizations, with all four superintegrable systems in \cite{Carinena2017} being realizable as special, multiseparable cases.
\end{remark}

	\section{Haantjes geometry and St\"ackel lifts} \label{sec:4}
	
	In this section, we wish to clarify the relationship between the St\"ackel lift and the Haantjes geometry.
	
	\subsection{$\omega\mathscr{H}$ manifolds and St\"ackel lifts}
	\begin{theorem} \label{maintheorem}
		Let $\big(H_1,\ldots,H_{n+1}\big)$ be a St\"ackel-lifted Hamiltonian system defined by relations \eqref{eq:LSH}, for a lift matrix of the form
		\begin{equation} 
			\mathcal{L}^{(n+1)}(\bs{q}, \bs{p}):= \left[\begin{array}{ccc|ccc}
				S_{11}(q^1)& \cdots & S_{1n}(q^1) & S_{1,n+1}(q^1,p_1) \\ 
				\vdots &\ddots & \vdots & \vdots \\
				S_{n1}(q^n) & \cdots & S_{nn}(q^n) &S_{n,n+1}(q^n,p_n)\\ \hline
				0 & \cdots & 0 & S_{n+1,n+1}(q^{n+1})
			\end{array}\right],
		\end{equation}
		with $ S_{n+1,n+1}(q^{n+1})\neq 0$. Then, there exists an associated $\omega \mathscr{H}$ manifold, whose Haantjes algebra $\mathscr{H}$ is defined by the set of semisimple operators
		\begin{equation} \label{eq:HaaK}
			\boldsymbol{K}_{j}:=\sum_{r=1}^{n+1} \mu_{jr}\bigg( \frac{\partial}{\partial q^r}\otimes \mathrm{d} q^r+ \frac{\partial}{\partial p_r}\otimes \mathrm{d} p_r\bigg),
			\quad \ j=1,\ldots,n+1,
		\end{equation}
		where $\mu_{jr}$ has the form
		\begin{equation} \label{eq:mu}
			\mu_{jr}\big(q^1,\ldots, q^n,p_1,\ldots,p_n\big):=\dfrac{\tilde{\mathcal{L}}_{jr }}{\tilde{\mathcal{L}}_{1r}}.
		\end{equation}
Here $\tilde{\mathcal{L}}_{jr}$ denotes the cofactor associated with the element $\tilde{\mathcal{L}}_{rj}$ of the lifting matrix.
	\end{theorem}
	\begin{proof}
		As a consequence of the Jacobi-Haantjes theorem, proved in \cite{RTT2022CNS}, the Hamiltonians $\big(H_1,H_2,\ldots,H_{n+1}\big)$ which are in total separable involution in the Darboux coordinates $(\bs{q},\bs{p})$, belong to the Haantjes chain generated by the operators 
		\begin{equation} \label{eq:LSoV}
			\boldsymbol{K}_j=\sum _{i=1}^{n+1} \frac{\frac{\partial H_{j}}{\partial p_i}}{ \frac{\partial H}{\partial p_i}}\bigg (\frac{\partial}{\partial q^i}\otimes \rd q^i +\frac{\partial}{\partial p_i}\otimes \rd p_i \bigg ),\qquad j=1,\ldots,n+1,
		\end{equation}
		where $H$ is any of the functions $(H_1, \ldots, H_{n+1})$, with $\frac{\partial H}{\partial p_i}\neq 0$, $i=1,\ldots,n+1$.
		
		Taking into account that the Hamiltonian system is of the AKN form \eqref{AKN}, one can observe that
		\begin{align}
			\nn \frac{\frac{\partial H_{j}}{\partial p_r}}{ \frac{\partial H_1}{\partial p_r}} ={}& \dfrac{\dfrac{1}{\det \mathcal{L}^{(n+1)}} \left(\sum_{k=1}^{n} \tilde{\mathcal{L}}_{jk}\, \frac{\partial f_{k}(q^k, p_k)}{\partial{p_r}}+ \frac{\partial \tilde{\mathcal{L}}_{j,n+1}}{\partial p_r} f_{n+1}(q^{n+1}, p_{n+1})\right)}{\dfrac{1}{\det \mathcal{L}^{(n+1)}} \left(\sum_{k=1}^{n} \tilde{\mathcal{L}}_{1k}\, \frac{\partial f_{k}(q^k, p_k)}{\partial{p_r}}+ \frac{\partial \tilde{\mathcal{L}}_{1,n+1}}{\partial p_r} f_{n+1}(q^{n+1}, p_{n+1})\right)}= 	\\
			{}&  \dfrac{\tilde{\mathcal{L}}_{jr }\frac{\partial f_r}{\partial p_r}+\frac{\partial \tilde{\mathcal{L}}_{j,n+1}}{\partial p_r}f_{n+1}(q^{n+1}, p_{n+1})}{\tilde{\mathcal{L}}_{1r}\frac{\partial f_r}{\partial p_r}+\frac{\partial \tilde{\mathcal{L}}_{1,n+1}}{\partial p_r}f_{n+1}(q^{n+1}, p_{n+1}) }. 
			\end{align}
Taking into account that $\tilde{\mathcal{L}}_{jr}=S_{n+1,n+1}(q^{n+1})\tilde{S}_{jr}^{(n)}$, $j,r\leq n$, and $\tilde{\mathcal{L}}_{j,n+1}=-\sum_{i=1}^{n}\tilde{S}^{(n)}_{ji}S_{i,n+1}(q^i,p_i)$, we arrive at the final expression
\[
\mu_{jr}\big(q^1,\ldots, q^n,p_1,\ldots,p_n\big) = \dfrac{\tilde{\mathcal{L}}_{jr }}{\tilde{\mathcal{L}}_{1r}}, \qquad j,r=1,\ldots, n+1.
\]
		These operators, being diagonal, have vanishing Haantjes torsion and generate a semisimple Abelian $\omega \mathscr{H}$ structure on $T^{*}\tilde{Q}$.
	\end{proof}
	\begin{remark}
		This theorem generalizes Proposition 3.1, proved in \cite{TT2016SIGMA}, which establishes the existence of a Haantjes structure associated with the standard AKN systems \eqref{AKN}. However, there is a crucial difference: the Haantjes operators \eqref{eq:HaaK}, unlike the case of those associated with systems \eqref{AKN}, are \textit{non-projectable} along the fibers of $T^{*}\tilde{Q}$ into operators with vanishing Haantjes torsion on the configuration space $\tilde{Q}$. This distinguishes momentum-dependent lifts from purely geodesic lifts and represents a genuinely new class of Haantjes structures.
	\end{remark}
	
	\subsection{Haantjes algebras for momentum-dependent Stäckel lifts} 
	
	To illustrate the main result of our construction—that Stäckel lifts with momentum-dependent coefficients generate Haantjes algebras where the operators themselves depend on momenta—consider the general case where the lifting functions $V_1$ and $V_2$ depend on both configuration and momentum variables: $V_1 = V_1(x^1, p_1)$ and $V_2 = V_2(x^2, p_2)$.
	
	For the algebra $\mathcal{A}_2=\{H_1,H_3, H_4 \}$ constructed from system \eqref{eq:3.18}, the associated St\"ackel matrix reads
	\begin{equation} \label{Stackel 2}
		\mathcal{L}^{-1}= 
		\begin{bmatrix}
			1 & 1 & V_1(x^1, p_1)+V_2(x^2, p_2) \\ 
			0 & 2 & 2 V_2(x^2, p_2) \\ 
			0 & 0 & 1
		\end{bmatrix}, \qquad \mathcal{L}= 
		\begin{bmatrix}
			1 & -1/2 & -V_1(x^1, p_1) \\ 
			0 & 1/2 & -V_2(x^2, p_2) \\ 
			0 & 0 & 1
		\end{bmatrix}.
	\end{equation}
	
	By using eqs. \eqref{eq:HaaK}-\eqref{eq:mu}, after simple algebraic manipulations we obtain the Haantjes operators
	\begin{equation}\label{K1 cart mom}
		\bs{K}_1:=\left[\begin{array}{cccccc}
			2 & 0 & 0 & 0 & 0 & 0 \\
			0 & 0 & 0 & 0 & 0 & 0 \\
			0 & 0 & \frac{2 V_1(x^1, p_1)}{V_1(x^1, p_1)+V_2(x^2, p_2)} & 0 & 0&0 \\
			0 & 0 & 0 & 2 & 0 & 0 \\
			0 & 0 & 0 & 0 & 0 & 0 \\
			0 & 0 & 0 & 0 & 0 & \frac{2 V_1(x^1, p_1)}{V_1(x^1, p_1)+V_2(x^2, p_2)}
		\end{array}\right],
	\end{equation}
	
	\begin{equation}\label{K2 cart mom}
		\bs{K}_2:=\left[\begin{array}{cccccc}
			0 & 0 & 0 & 0 & 0 & 0 \\
			0 & 2 & 0 & 0 & 0 & 0 \\
			0 & 0 & \frac{2 V_2(x^2, p_2)}{V_1(x^1, p_1)+V_2(x^2, p_2)} & 0 & 0 &0\\
			0 & 0 & 0 & 0 & 0 & 0 \\
			0 & 0 & 0 & 0 & 2 & 0 \\
			0 & 0 & 0 & 0 & 0 & \frac{2 V_2(x^2, p_2)}{V_1(x^1, p_1)+V_2(x^2, p_2)}
		\end{array}\right],
	\end{equation}
	
	\begin{equation}
		\bs{K}_3=\frac{1}{V_1(x^1, p_1)+V_2(x^2, p_2)}\mathrm{diag}(0,0,1,0,0,1).
	\end{equation} 
	
Here $\bs{K}_1$ is the Haantjes operator associated with $H_2=2H_1-H_3$. The momentum dependence enters explicitly through the ratios
	
	$$\frac{2V_1(x^1, p_1)}{V_1(x^1, p_1)+V_2(x^2, p_2)} \quad \text{and} \quad \frac{2V_2(x^2, p_2)}{V_1(x^1, p_1)+V_2(x^2, p_2)}$$
	in the $(3,3)$ and $(6,6)$ entries of $\bs{K}_1$ and $\bs{K}_2$, respectively, and through the factor $1/(V_1(x^1, p_1)+V_2(x^2, p_2))$ in the diagonal entries of $\bs{K}_3$. This shows that the Haantjes algebra of recursion operators $\{\bs{K}_1, \bs{K}_2, \bs{K}_3\}$ depends on the phase-space variables and is therefore non-projectable.
	
	\begin{remark} When the functions $V_1$ and $V_2$ depend only on variables of the configuration space, i.e., $V_1 = V_1(x^1)$ and $V_2 = V_2(x^2)$, the momentum dependence disappears from the Haantjes operators. In this case, we recover the standard Riemannian Eisenhart lift discussed in Section~\ref{sec:3}, where the algebra $\mathcal{A}_1=\{H_1,H_2, H_4 \}$ can be regarded as a geodesic Hamiltonian system in three dimensions with (inverse) metrics
	\begin{gather}
		g_1^{kk}(\bs{q})= \mathrm{diag}(1,1, V_1(x^1)+V_2(x^2)),\\
		g_2^{kk}(\bs{q})= \mathrm{diag}(2,0, 2~V_1(x^1)), \qquad 
		g_4^{kk}(\bs{q})= \mathrm{diag}(0,0, 1),
	\end{gather}
	and the associated Haantjes operators induce configuration-space Killing tensors.
\end{remark}
	\section{New separable systems generated by iterated geodesic lifts} \label{sec:5}

	In the spirit of the Eisenhart lift's embedding of dynamics as higher-dimensional geodesic motion, we propose a lift that preserves this geodesic character. We achieve this by first analyzing one of the simplest realizations of the momentum-dependent St\"ackel lift \eqref{eq:LSM}-\eqref{eq:LSH}, given by 
	
	\begin{equation} \label{eq:4.1}
		\mathcal{L}^{(3)}(\bs{q},\bs{p})=
		\begin{bmatrix}
			0 & 1/2 & -V_1(x^1) p_1 \\ 1 & -1/2 & -V_2(x^2)p_2 \\ 0 & 0 & 1
		\end{bmatrix}, 
	\end{equation}
	where $\tbf{q}=(x^1,x^2,u)$ denotes our configuration coordinates, and the functions $V_i(x^{i})$ come from the additively separated potential of the system \eqref{eq:3.18}. This corresponds to lifting again the same system, but in a novel, non-Eisenhart way.
	By choosing $f_1=\frac{1}{2}p^{2}_{1}$, $f_2=\frac{1}{2}p^{2}_{2}$, $f_3= p_u$, and applying the lift to the 2D model \eqref{eq:3.18}, we obtain the family of integrable systems, defining a geodesic motion in 3D\footnote{Hereafter we will scale the resulting $H_2$ by a half, and take the square of $H_3$.}:
	
	\begin{equation} \label{Stackel Riem lift}
		\begin{split}
			H_1 &= \frac{1}{2}(p_1^2+p_2^2) + V_1(x^1) p_1 p_u+ V_2(x^2) p_2 p_u, \\ 
			H_2 &= \frac{1}{2}p_1^2+ V_1(x^1) p_1 p_u, \\ 
			H_3 &= \frac{1}{2}p_u^2.
		\end{split}
	\end{equation}
	
By adopting the alternative St\"ackel basis $(\tilde{H}_1=H_1+H_3,\tilde{H}_2=H_2,\tilde{H}_3=H_3)$, chosen so that $\tilde{H}_1$ possesses nonvanishing diagonal metric components $g^{ii}_1\neq0$, the Hamiltonian system $(\tilde{H}_1,\tilde{H}_2,\tilde{H}_3)$ could be interpreted, at least formally, as a ``magnetic'' model arising from the Riemannian Eisenhart lift (compare with \eqref{eq:Lor lift}),
	\begin{equation}
		H= \frac{1}{2}\sum_{i,j=1}^{n} g^{ij}(\bs{q},t) (p_i -p_u V_i)(p_j- p_u V_j) + p_u^2 V(\bs{q},t),
	\end{equation}
	with an associated vector potential $A=(V_1(x^1),V_2(x^2),0)$ and vanishing scalar potential.
	However, if we rewrite the system in the gauge-covariant form
	\begin{equation}
		\tilde{H}_1= \frac{1}{2}\left[(p_1+V_1(x^1) p_u)^2+(p_2+V_2(x^2)p_u)^2+(1-V_1(x^1)^2-V_2(x^2)^2)p_u^2\right],
	\end{equation}
	we observe that $B=\operatorname{curl} A=0$. Thus, there is no momentum (or velocity) dependent term in the force
	$F_i=V_i(x^i) \pd_{i}V_i(x^i).$ By means of the canonical transformation
	\begin{equation}\label{gauge riem lift}
		(Q_1,Q_2,Q_3,P_1,P_2,P_3)=(x^1,x^2,u-\int V_1(x^1)\mathrm{d}x^1-\int V_2(x^2)\mathrm{d}x^2,p_1+V_1(x^1) p_u,p_2+V_2(x^2)p_u,p_u),
	\end{equation}
	we are led to the ``nonmagnetic'' form
	\begin{equation}
		\tilde{H}_1= \frac{1}{2}\left[P_1^2+P_2^2+(1-V_1(Q_1)^2-V_2(Q_2)^2)P_3^2\right].
	\end{equation}
	The canonical transformation \eqref{gauge riem lift} can be regarded as the \textit{Riemannian lift of the gauge transformation}.\footnote{Geometrically, according to \cite{BM2013PRD}, such transformations correspond to different but equivalent choices of the ``screen'' world volumes that foliate the space(time).}

	The lift \eqref{eq:4.1} can be naturally iterated to generate higher-dimensional Hamiltonian integrable systems. As a second step, in the cotangent bundle $T^{ *}\tilde{Q}$ with coordinates $(x^1,x^2,u_1,u_2, p_1,p_2,p_{u_1},p_{u_2})$, we define the lifting matrix
	\begin{equation}
		\mathcal{L}^{(4)}(\bs{q},\bs{p}):=
		\begin{bmatrix}
			0 & 1/2 & -V_1(x^1) p_1 & -W_{1}(x^1) p_1 \\ 1 & -1/2 & -V_2(x^2)p_2 & -W_{2}(x^2) p_2 \\ 0 & 0 & 1 & -W_{3}(u_1) \\ 0 & 0 & 0 & 1
		\end{bmatrix},
	\end{equation}
	where $W_{1}(x^1)$, $W_{2}(x^2)$, $W_{3}(u_1)$ are arbitrary functions.
	By choosing as St\"ackel functions $f_1= \frac{1}{2}p_1^2$, $f_2=\frac{1}{2}p_2^2$, $f_3= p_{u_1}$, $f_4= p_{u_2}$, we obtain the lifted, homogeneous of degree 2 and 4D integrable model (again scaling $H_2$ by a half and squaring $H_3$ and $H_4$)
	\begin{equation}
		\begin{split}
			H_1 &= \frac{1}{2}\big(p_1^2+p_2^2\big) + V_1(x^1) p_1 p_{u_1} + V_2(x^2) p_2 p_{u_1} +\big(W_{1}(x^1)+ V_1(x^1) W_{3}(u_1) \big) p_1 p_{u_2} \\
			&\quad +\big(W_{2}(x^2)+ V_2(x^2) W_{3}(u_1) \big) p_2 p_{u_2}, \\
			H_2 &= \frac{1}{2} p_1^2 + V_1(x^1) p_1 p_{u_1} +\big(W_{1}(x^1)+ V_1(x^1) W_{3}(u_1) \big) p_1 p_{u_2}, \\
			H_3 &= \frac{1}{2} (p_{u_1}+W_{3}(u_1)p_{u_2})^2, \\
			H_4 &= \frac{1}{2} p_{u_2}^2.
		\end{split}
	\end{equation}
This system again describes geodesic motion in a 4D configuration space.

\begin{remark}
	More generally, the proposed iterative construction yields higher-dimensional integrable Hamiltonian systems that a priori can serve as explicit toy models for field theory and for the description of anisotropic media.
\end{remark}

\section{St\"ackel lift, Lorentzian geometries and gravitational waves} \label{sec:6}

In Section~\ref{sec:3.1}, we reviewed the Eisenhart lift, which embeds mechanical systems into Bargmann-Eisenhart waves—spacetimes with parallel null vector fields. Here we show how the Stäckel lift naturally extends this construction to Platonic waves \cite[Def.~4.11]{BM2013PRD}, a special class of Kundt spacetimes with potential applications in modified gravity theories.


Precisely, a Platonic wave is defined as a spacetime admitting a null Killing vector field $\xi$ that becomes parallel with respect to a conformally related metric; the conformal factor is then automatically Lie-constant along $\xi$.

Equivalently, Platonic waves are conformally equivalent to Bargmann-Eisenhart waves while preserving the null Killing symmetry.

The Stäckel approach is systematic: it treats both the conformal factor and electromagnetic potentials as components of a unified lifting matrix, directly revealing separability. To illustrate, consider the lifting matrix
\begin{equation}\label{Platonic lift}
	\mathcal{L}^{(4)}(\bs{q},\bs{p}):=
	\begin{bmatrix}
		0 & 1/2 & -V_1(x^1) & -W_{1}(x^1) p_1 \\
		1 & -1/2 & -V_2(x^2)& -W_{2}(x^2) p_2 \\
		0 & 0 & 1 & - p_{u} \\
		0 & 0 & 0 & 1 \\
	\end{bmatrix}.
\end{equation}

	From the choice $f_1= \frac{1}{2}p_1^2$, $f_2=\frac{1}{2}p_2^2$, $f_3= \frac{1}{2}p_u^2$, $f_4= p_t$, we obtain (after some algebraic manipulations) the integrable Hamiltonian model
\begin{equation}
	\begin{split}
		H_1 &= \frac{1}{2}\big(p_1^2+p_2^2\big) +\big(V_1(x^1)+V_2(x^2) \big) \left( \frac{1}{2}p_u^2 + p_t p_u\right) + W_1(x^1) p_1 p_t + W_2(x^2) p_2 p_t, \\
		H_2 &= \frac{1}{2} \big(p_1^2 + V_1(x^1) p_{u}^2\big) + V_1(x^1) p_t p_u + W_1(x^1) p_1 p_t, \\
		H_3 &= \frac{1}{2} p_{u}^2+ p_t p_u, \\
		H_4 &= \frac{1}{2} p_t^2.
	\end{split}
\end{equation}
The Stäckel construction ensures that these integrals $\{H_1, H_2, H_3, H_4\}$ are in involution; they are also associated with a symplectic-Haantjes manifold, guaranteeing complete separability.
Setting $\Omega(x^1,x^2) := V_1(x^1)+V_2(x^2)$, this can be rewritten as
\begin{equation}
	\begin{split}
		H_1 &= \frac{1}{2}g^{\mu\nu}p_\mu p_\nu=\Omega(x^1,x^2) \Bigg[\frac{p_1^2+p_2^2}{2\Omega(x^1,x^2)} +\frac{1}{2} p_u^2 +p_t p_u \\
		&\quad + \frac{W_1(x^1)}{\Omega(x^1,x^2)} p_1 p_t + \frac{W_2(x^2)}{\Omega(x^1,x^2)} p_2 p_t \Bigg],
	\end{split}
\end{equation}
revealing the underlying conformal structure. The corresponding spacetime metric, obtained by inverting the $g^{\mu\nu}$ of $H_1$, takes the form
\begin{equation}
	ds^2 = (dx^1)^2 + (dx^2)^2 + \frac{1}{\Delta}\left[ 2\,dW(dt - du) - (dW)^2 + \frac{\|\boldsymbol{W}\|^2}{\Omega}\,du^2 + 2\,du\,dt - dt^2 \right],
\end{equation}
where $dW := W_1(x^1)\,dx^1 + W_2(x^2)\,dx^2$, $\|\boldsymbol{W}\|^2 := W_1^2 + W_2^2$, $\Omega := V_1(x^1) + V_2(x^2)$, and $\Delta := \Omega + \|\boldsymbol{W}\|^2$. Both $\partial_u$ and $\partial_t$ are Killing vector fields, since the metric depends only on $(x^1,x^2)$. Assume $\Omega >0$. The component $g_{uu} = \|\boldsymbol{W}\|^2/(\Omega\Delta)$ vanishes if and only if $W_1 = W_2 = 0$; only then is $\partial_u$ null and the metric conformally a Bargmann–Eisenhart wave, with the conformal structure exhibited by the factorization above. In this case it reduces to $ds^2 = (dx^1)^2 + (dx^2)^2 + \Omega^{-1}(2\,dt\,du - dt^2)$, and when moreover $\Omega \equiv 1$ (i.e., $V_1 + V_2 = 1$) one recovers the Bargmann-Eisenhart case of Section~\ref{sec:3}, where $\partial_u$ is parallel rather than merely Killing. In general, $g = \Omega^{-1}\tilde{g}$, where $\tilde{g} = \Omega\left[(dx^1)^2+(dx^2)^2\right] + 2\,dt\,du - dt^2$ is a Bargmann-Eisenhart wave with parallel null vector $\partial_u$; the conformal factor in the sense of \cite[Def.~4.11]{BM2013PRD} is thus $\Omega^{-1}$. For $\|\boldsymbol{W}\|^2 \neq 0$,  instead, $\partial_u$ is spacelike and the geometry is a stationary spacetime rather than a gravitational wave.

\subsection{Example: Integrable Platonic wave with flat wave fronts}
To illustrate the physical content and the power of the Stäckel framework, consider the specific case
\begin{equation}
	V_1(x^1) = 0, \quad V_2(x^2) = (x^2)^{-2}, \quad W_1(x^1) = W_2(x^2) = 0.
\end{equation}

This yields a Coriolis-free Platonic wave with the metric:
\begin{equation}
	ds^2 = (dx^1)^2 + (dx^2)^2 + 2(x^2)^2\,dt\,du - (x^2)^2\,dt^2.
\end{equation}

Setting $x^2 = z$ and $x^1 = y$, the wave fronts (the surfaces $u,t=\text{const}$) are flat planes with metric $d\ell^2 = (dy)^2 + (dz)^2$. The spacetime itself is curved: its Ricci scalar $\mathcal{R} = -2/(x^2)^2$ is non-constant, so this is a curved, completely integrable Kundt-class Platonic wave, although it is not an Einstein manifold.

The Hamiltonian system reads
\begin{equation}
	H_1 = \frac{1}{2}(p_1^2 + p_2^2) + \frac{1}{(x^2)^2}\left(\frac{1}{2}p_u^2 + p_t p_u\right),
\end{equation}
with integrals of motion
\begin{equation}
	H_2 = \frac{1}{2}p_1^2, \quad H_3 = \frac{1}{2}p_u^2 + p_t p_u, \quad H_4 = \frac{1}{2}p_t^2.
\end{equation}

In the context of gravitational theories, the key features of our approach are the following. 

\textit{(i) Complete integrability and separability:} The Stäckel construction guarantees that the Hamilton-Jacobi equation separates in the coordinates $(x^1, x^2, u, t)$. This is a rare property for spacetimes with nontrivial curvature.

\textit{(ii) Projection to nonrelativistic dynamics:} Geodesics with $p_u = m \neq 0$ and $p_t = E$ project onto the motion of a particle on the flat plane $\mathbb{R}^2$ subject to the inverse-square potential $(\tfrac12 m^2 + Em)\,(x^2)^{-2}$. The complete separability in the ambient spacetime directly implies separability of the reduced nonrelativistic system.

\textit{(iii) Geometric structure:} As a Platonic wave, this spacetime belongs to the Kundt class—it admits a geodesic, expansion-free, shear-free, and twist-free null congruence. Although the wave fronts are flat, the null Killing field $\partial_u$ is not parallel, unlike in the case of $pp$-waves.

\textit{(iv) Extensions to modified gravity contexts:} Platonic-wave geometries can arise as solutions in modified gravity theories (e.g., $f(R)$ gravity with matter coupling). Whether our specific examples satisfy such field equations is left for future work.

\begin{remark}
	These completely integrable Platonic-wave geometries illustrate how the Stäckel construction naturally generates interesting Kundt-class solutions; physical realizability in modified gravity remains an open problem.
\end{remark}

\section{Non-geodesic flows and higher-order integrable systems} \label{sec:7}
	
	The St\"ackel lift procedure introduced above is not limited to producing geodesic lifts. We show the versatility of the method by discussing three of the infinitely many possible non-geodesic flows that can be constructed.
	
	\subsection{A quadratic and non-homogeneous system}
	First, observe that the 2D model \eqref{eq:3.18} can also be lifted by means of the matrix \eqref{eq:4.1} to a non-geodesic, still quadratic integrable Hamiltonian system by choosing as St\"ackel functions $f_1= \frac{1}{2}p_1^2 +U_1(x^1)$, $f_2=\frac{1}{2}p^{2}_{2}+U_2(x^2)$, $f_3= p_u+U_3(u)$. In this way, we define the system
	\begin{equation}
		\begin{split}
			H_1 &= \frac{1}{2}(p_1^2+p_2^2) + V_1(x^1) p_1 p_u+ V_2(x^2) p_2 p_u+ (V_1(x^1)p_1+V_2(x^2)p_2) U_3 (u) \\  &+ U_1(x^1)+U_2(x^2), \\
			H_2 &= p_1^2+2 V_1(x^1) p_1 p_u+ 2 V_1(x^1) U_3(u) p_1+ 2 U_1(x^1), \\
			H_3 &= (p_u+ U_3(u))^2.
		\end{split}
	\end{equation}
	This system can be made homogeneous by applying a standard Eisenhart lift to it.
	\subsection{Higher-order or transcendental systems}
	
	Due to the arbitrariness of the functions in the lifted matrix \eqref{eq:LSM}, higher-order integrable and separable systems admitting non-trivial, even transcendental integrals in the momenta can be easily generated.
	
	As a simple realization of this idea, from the simple deformation of matrix \eqref{eq:4.1} given by 
	\begin{equation} \label{eq:6.2}
		\mathcal{L}^{(3)}(\bs{q},\bs{p})=
		\begin{bmatrix}
			0 & 1/2 & -V_1(x^1) e^{p_1} \\ 1 & -1/2 & -V_2(x^2)e^{-p_2} \\ 0 & 0 & 1
		\end{bmatrix},
	\end{equation}
	and by choosing $f_1=\frac{1}{2}p^{2}_{1}$, $f_2=\frac{1}{2}p^{2}_{2}$, $f_3= p_u$, 
	we obtain (after some manipulations) the integrable model 
	\begin{equation}
		\begin{split}
			H_1 &= \frac{1}{2}(p_1^2+p_2^2) + V_1(x^1) e^{p_1} p_u+ V_2(x^2) e^{-p_2} p_u, \\ 
			H_2 &= \frac{1}{2} p_1^2 + V_1(x^1) e^{p_1} p_u, \\
			H_3 &= \frac{1}{2} p_u^2.
		\end{split}
	\end{equation}

	\subsection{Lifted Hamiltonians with linear terms}
	
	The St\"ackel lift can also be easily accommodated to produce lifted systems admitting linear terms in the momenta. Systems of this kind admit linear terms in the momenta, which are associated with gyroscopic or, in suitable time-dependent or contact formulations, dissipative dynamics; they are physically interesting on their own.
	A straightforward way is to consider a lifting matrix of the form

	\begin{equation}
		\mathcal{L}^{(4)}(\bs{q},\bs{p}):=
		\begin{bmatrix}
			0 & 1/2 & -V_1(x^1)+ p_1 & -W_{1}(x^1)+ p_1 \\ 1 & -1/2 & -V_2(x^2)+ p_2 & -W_{2}(x^2)+ p_2 \\ 0 & 0 & 1 & -W_{3}(u_1) \\ 0 & 0 & 0 & 1 \\
		\end{bmatrix},
	\end{equation}
	where $W_{1}(x^1)$, $W_{2}(x^2)$, $W_{3}(u_1)$ are arbitrary functions. 
	By choosing as St\"ackel functions $f_1= \frac{1}{2}p_1^2$, $f_2=\frac{1}{2}p_2^2$, $f_3= p_{u_1}$, $f_4= p_{u_2}$, we obtain a different quadratic integrable model with linear terms in 4D (for brevity, we omit the explicit expressions of the Hamiltonians).

\begin{remark}
	The physical motivation for the analysis of this family of models relies on the fact that several important classical systems admit higher-order integrals of motion, such as the three-particle Calogero model \cite{TT2016SIGMA}, the Post-Winternitz system \cite{TT2022AMPA}, etc. In addition, they correspond to higher-order quantum operators, making explicit classical examples valuable for the study of exactly solvable quantum models.
\end{remark}

\section{St\"ackel lift for cylindrical separable magnetic systems} \label{sec:8}
The separation of natural Hamiltonians with scalar and vector potentials on 
Riemannian manifolds was systematically studied by Benenti, Chanu, and Rastelli 
\cite{Benenti2001} using the Riemannian Eisenhart lift discussed in Section 
\ref{sec:3}. They provided a complete geometric characterization in terms of 
characteristic Killing pairs $(D, K)$ on the configuration space, proving that 
geodesic separation in the extended space corresponds to gauge separation 
(R-separation) of the original Hamiltonian.

In this section, we apply our Stäckel lift to the three classes of magnetic 
systems separable in cylindrical coordinates classified in \cite{Benenti2001} 
(see also \cite{Shapovalov} for quantum systems). These systems were rederived 
from the viewpoint of Haantjes geometry in \cite{KRTT2024PRA}, but 
Stäckel matrices were explicitly provided only for Case 3. Here we complete the picture by 
constructing the Stäckel structures for all three cases, thereby filling a gap 
from our previous work.

Our analysis reveals that all three cases admit Stäckel formulations, but with important differences. Case 3 allows the simple treatment from \cite{KRTT2024PRA}, where adding $H_{3}^2$  preserves dynamics on level sets. Cases 1 and 2, however, require incorporating the coefficient $\frac{1}{r^2}$ into the change of basis, which modifies the Hamiltonian itself rather than merely reparametrizing the dynamics. We thus demonstrate how to embed classical separable magnetic systems within the St\"ackel framework. The required transformations reflect the geometric structure of the underlying coordinate system.

The St\"ackel matrices corresponding to the three separable cases read\footnote{The term $V_1(r)/r^2$ in $S_2$ is necessary to obtain $p_\phi/r^2$ in the Hamiltonian as in \cite{KRTT2024PRA}.}

\begin{equation} \label{S3 cyl}
	\begin{split}
		S_1 &= \left[\begin{array}{ccc}
			1 & -\frac{f(r)}{r^{2}} & -V_1(r) \\
			0 & 1 & 0 \\
			0 & 0 & 1 
		\end{array}\right], \quad
		S_2 = \left[\begin{array}{ccc}
			1 & -1 & -\frac{V_1(r)}{r^2} \\
			0 & 0 & 1 \\
			0 & 1 & -V_3(z) 
		\end{array}\right], \\
		S_3 &= \left[\begin{array}{ccc}
			1 & -\frac{1}{r^{2}} & -V_1(r) \\
			0 & 1 & -V_2(\phi) \\
			0 & 0 & 1 
		\end{array}\right].
	\end{split}
\end{equation}

For each case discussed below, we begin with the general lift\footnote{We choose the minus sign of $S_{j4}$ for convenience.} 

\begin{equation}
	\mathcal{L}^{(4)}_i(r,\phi,z,u,p_r,p_{\phi},p_z,p_u) = \left[\begin{array}{ccc|c}
		& & &-S_{14}(r, p_r) \\
		& S_i & & -S_{24}(\phi, p_\phi) \\
		& & &- S_{34}(z, p_z) \\ \hline
		0 & 0 & 0 & 1
	\end{array}\right],
\end{equation}
and examine admissible choices that yield quadratic Hamiltonians, thereby remaining within the conventional separation on the configuration space. 

\vspace{2mm}

\begin{remark}
	The systems that we will rederive in the following analysis describe the dynamics of charged particles in axially symmetric magnetic fields, namely the canonical setting for magnetic traps and confinement geometries. More general constructions, which lead to momentum-dependent metrics and are likewise of physical interest (see Remark \ref{rmk momentum dep}), will be explored in future work.
\end{remark}

\textit{Case 1.} The matrix $S_1$ corresponds to the case with two cyclic variables $(\phi, z)$, that is, two linear integrals of motion. Accordingly, two of the original St\"ackel functions are linear in the momenta, namely $f_1 = p_r^2 + U_r(r)$, $f_2 = p_{\phi}$, $f_3 = p_z$, along with $f_4 = p_u$.
Using this ansatz in \eqref{eq:LSH}, we obtain the separable system
\begin{equation}
	\begin{split}
		H_1 &= p_r^{2}+U_r(r)+\frac{f(r) p_\phi}{r^{2}}+V_1(r) p_z \\
		&\quad + \left(V_1(r) S_{34}(z, p_z) +S_{14}(r, p_r)+\frac{f(r) S_{24}(\phi, p_\phi)}{r^{2}}\right) p_u, \\
		H_2 &= p_\phi +S_{24}(\phi, p_\phi) p_u, \\
		H_3 &= p_z +S_{34}(z, p_z) p_u, \\
		H_4 &= p_u.
	\end{split}
\end{equation}

To ensure that the integrals associated with cyclic coordinates remain linear in momenta, we impose the conditions $S_{24} = S_{24}(\phi)$ and $S_{34} = S_{34}(z)$, independent of the momenta. These functions can then be eliminated by means of the canonical transformation
\begin{equation}
	(p_{\phi}\to p_{\phi}+S_{24}(\phi)p_u, \quad p_z\to p_z+S_{34}(z) p_u, \quad u\to u-\int S_{34}(z) \mathrm{d} z-\int S_{24}(\phi) \mathrm{d} \phi),
\end{equation}
which is the \textit{Riemannian lifted gauge transformation}. Therefore, we set both $S_{24}$ and $S_{34}$ to zero. 

The linear momentum terms in $H_1$ correspond to the magnetic field and cannot be removed by any choice of the lift functions $S_{j4}$, as they are coupled to $p_u$. Consequently, the St\"ackel lift does not produce an Eisenhart-lifted system.

To preserve the quadratic structure of $H_1$, we set $S_{14}=\sigma_1(r)+\sigma_2(r) p_r$. Then, $H_1$ can be interpreted as a magnetic Hamiltonian in which certain quadratic diagonal terms in the metric are absent. To render the Hamiltonian complete, we adopt a modified Stäckel basis:
\begin{equation}
	(\tilde{H}_1:=H_1+\frac{1}{r^2}H_2^2+H_3^2,\tilde{H}_2:=H_2,\tilde{H}_3:=H_3,\tilde{H}_4:=H_4).
\end{equation}

This modification incorporates the coefficient $\frac{1}{r^2}$ from the cylindrical coordinate system to complete the quadratic form. However, unlike the simple addition of $p_z^2$ in Case 3 (which preserves dynamics on level sets of the integrals), here we are adding $\frac{c_2^2}{r^2} + c_3^2$ on the level sets $H_2 = c_2$, $H_3 = c_3$. Since $r$ is a dynamical variable, the term $\frac{c_2^2}{r^2}$ varies along trajectories, meaning $\tilde{H}_1$ and $H_1$ define different energy surfaces and therefore describe genuinely different dynamical systems.

Nevertheless, this construction demonstrates that Case 1 can be embedded into Stäckel theory, albeit at the cost of modifying the Hamiltonian. The modified $\tilde{H}_1$ lives naturally on the warped Riemannian manifold $Q_1[r,z,u]\times_\lambda Q_2[\phi]$ with metric structure $g=g_1(r,z,u)+\frac{1}{r^2}g_2(\phi)$, where $\lambda(r,z,u)=\frac{1}{r^2}$ is the warp factor, and admits the standard Stäckel separation.

The term independent of momenta, $\sigma_1(r)$, represents a new component to the vector potential $A_u$. It depends solely on the non-ignorable coordinate $r$ and acts as a Stäckel multiplier, i.e., it has the same form as a separable scalar potential, as required by the standard theory of separation for magnetic Hamiltonians \cite{Benenti2001}. This term cannot be removed by any canonical transformation; such transformations can only result in a scalar potential term that yields the same equations of motion.

The momentum-dependent term $\sigma_2(r) p_r p_u$ introduces a non-diagonal metric component $g^{ru} = \frac{1}{2}\sigma_2(r)$. Although we can make the metric diagonal via the transformation $p_r \mapsto p_r + \frac{1}{2}\sigma_2(r) p_u$, this term cannot be entirely removed, when $\sigma_2$ is non-constant, reflecting the non-vanishing curvature of the lifted space.

\vspace{2mm}

\textit{Case 2.} Only $\phi$ is cyclic; thus, $p_\phi$ is a linear integral of motion. The corresponding St\"ackel functions are $f_1=p_r^2$, $f_2=p_{\phi}$, $f_3=p^2_z$, and the new one $f_4=p_u$. This implies the existence of a single component of the vector potential in the original system, depending on two variables: $A_\phi=(V_3(z) r^2 + V_1(r))$.

We use \eqref{eq:LSH} to obtain our separable Hamiltonians\footnote{Because $S_2$ is not upper triangular, equation \eqref{AKN} changes order of the Hamiltonians $H_2, H_3$.}
\begin{equation}
	\begin{split}
		H_1 &= p_r^2 + \frac{(V_3(z) r^2 + V_1(r)) p_\phi}{r^2} + p_z^2 \\
		&\quad + \frac{[(V_3(z) r^2 + V_1(r)) S_{24}(\phi, p_\phi) + r^2( S_{14}(r, p_r)+S_{34}(z, p_z))] p_u}{r^2}, \\
		H_2 &= p_\phi + S_{24}(\phi, p_\phi) p_u, \\
		H_3 &= p_z^2 +V_3(z) p_\phi + (V_3(z) S_{24}(\phi, p_\phi) + S_{34}(z, p_z)) p_u, \\
		H_4 &= p_u.
	\end{split}
\end{equation}

Following analogous reasoning, we set $S_{34} = S_{34}(z)$ and remove it via a gauge transformation. To preserve the quadratic structure of $H_1$, we choose $S_{14}(r, p_r) = \sigma_1(r) + \sigma_2(r) p_r$ and $S_{24}(\phi, p_\phi) = \tau_1(\phi) + \tau_2(\phi) p_\phi$, which yields the vector potential component $A_u = \frac{\sigma_1(r) r^2 + (V_3(z)r^2 + V_1(r))\tau_1(\phi)}{r^2}$ and the non-diagonal metric terms $g^{ru} = \frac{1}{2}\sigma_2(r)$, $g^{\phi u} = \frac{1}{2}\frac{V_3(z)r^2 + V_1(r)}{r^2}\tau_2(\phi)$.

The change of Stäckel basis that completes the quadratic form is $(\tilde{H}_1:=H_1+\frac{1}{r^2}H_2^2,\tilde{H}_2:=H_2,\tilde{H}_3:=H_3,\tilde{H}_4:=H_4).$ As in Case 1, this modification alters the Hamiltonian rather than merely reparametrizing the original dynamics.

\vspace{2mm}

\textit{Case 3.} Only $z$ is cyclic, hence $p_z$ is the sole  integral linear in the momenta of the original 3D system. The associated St\"ackel functions are $f_1=p_r^2,$ $f_2=p_{\phi}^2$, $f_3=p_z$ and, as always, $f_4=p_u$. Here we lift matrix $S_3$ from \eqref{S3 cyl}.

The resulting Hamiltonians are
\begin{equation}
	\begin{split}
		H_1 &= p_r^2 + \frac{p_\phi^2}{r^2} + \frac{(V_1(r) r^2 + V_2(\phi)) p_z}{r^2} \\
		&\quad + \frac{[(V_1(r) r^2+ V_2(\phi)) S_{34}(z, p_z) + S_{14}(r, p_r) r^2 + S_{24}(\phi, p_\phi)] p_u}{r^2}, \\
		H_2 &= p_\phi^2 + V_2(\phi) p_z + (V_2(\phi) S_{34}(z, p_z) + S_{24}(\phi, p_\phi)) p_u, \\
		H_3 &= p_z + S_{34}(z, p_z) p_u, \\
		H_4 &= p_u.
	\end{split}
\end{equation}
By analogy, we set $S_{34} = S_{34}(z)$ and remove it via a gauge transformation. Keeping the quadratic structure of $H_1$ requires $S_{14}(r, p_r) = \sigma_1(r) + \sigma_2(r) p_r$ and $S_{24}(\phi, p_\phi) = \tau_1(\phi) + \tau_2(\phi) p_\phi$, yielding the vector potential component $A_u = (\sigma_1(r) r^2 + \tau_1(\phi))/r^2$ and the non-diagonal metric terms $g^{ru} = \frac{1}{2}\sigma_2(r)$, $g^{\phi u} = \frac{1}{2}\frac{\tau_2(\phi)}{r^2}$.

The change of St\"ackel basis that incorporates the missing $p_z^2$ term is a simple linear combination: $(\tilde{H}_1:=H_1+H_3^2,\tilde{H}_2:=H_2,\tilde{H}_3:=H_3,\tilde{H}_4:=H_4).$ This change of basis, which leads to the St\"ackel formulation of the original 3D system, was found in \cite{KRTT2024PRA}.

\begin{remark}
	The analysis above reveals a subtle hierarchy among separable magnetic systems in cylindrical coordinates. Case 3, where only $z$ is cyclic, admits a straightforward Stäckel formulation with the simple modification $\tilde{H}_1 = H_1 + H_3^2$, which preserves the dynamics on level sets of $H_3 = p_z$. In contrast, Cases 1 and 2 require incorporating the  coefficient $\frac{1}{r^2}$ into the change of basis, which modifies the Hamiltonian and therefore the dynamics.

The Haantjes geometric approach developed in \cite{KRTT2024PRA} handles all three cases uniformly without requiring such modifications, demonstrating the advantage of coordinate-free geometric characterizations of separability for systems that resist traditional separation methods.
	
	More generally, the separability theory in \cite{Benenti2001}, which applies to all (pseudo)Riemannian manifolds without second-class null coordinates, implies that—in adapted coordinates—the linear terms in the momenta are always associated with a Killing vector and its corresponding first-order integral. The Lamé-type coefficients used in our changes of Stäckel basis depend only on the essential coordinates and therefore commute with the conserved integrals that are added, which is why these modifications yield separable systems (albeit with inequivalent Hamiltonians).
\end{remark}

\section{Discussion and future perspectives} \label{sec:9}

The main contribution of this work is to establish the Stäckel lift as a unified geometric framework for constructing integrable Hamiltonian systems of dimension $n+k$ from lower-dimensional seed systems. Theorem~\ref{theo:1} demonstrates that the classical Eisenhart lifts, both Riemannian and Lorentzian, emerge as special cases of this construction, revealing a deeper structural principle underlying these long-studied geometries. The Stäckel perspective reframes the Eisenhart lift not as an ad hoc embedding procedure but as an instance of generalized Stäckel geometry, where the lifting matrix plays the role of extended Stäckel structure. This unification suggests that the special role of the Eisenhart lift in embedding mechanical systems as geodesic flows is a consequence of a more general principle: any separable system on an $n$-dimensional manifold can be lifted to an $(n+k)$-dimensional integrable system by extending its Stäckel matrix.

Theorem~\ref{maintheorem} establishes that Stäckel-lifted systems naturally carry non-trivial symplectic-Haantjes structure, while the Haantjes operators can be reconstructed from those of the seed system. This result deepens the connection between separation of variables and Haantjes geometry initiated in recent work on $\omega\mathscr{H}$ manifolds \cite{TT2021JGP,TT2022AMPA,RTT2022CNS,RTT2024AMPA}. Crucially, momentum-dependent Stäckel lifts produce non-projectable Haantjes operators—they cannot be obtained by lifting Killing tensors from configuration space (for a related discussion, see \cite{BKM2025stackel,BKM2026duality}). This non-projectability is an interesting phenomenon that reflects the new geometric structure of momentum-dependent lifting matrices.

The existence of such structures motivates a further classification and understanding of when and how Haantjes geometry is not related to Killing geometry, and is therefore more general. This question has implications for the structure of integrable systems on cotangent bundles.

The applications we presented demonstrate the versatility of the Stäckel lifting framework. Momentum-dependent lifting matrices can produce Platonic-wave geometries \cite{BM2013PRD}---Kundt-class spacetimes with nontrivial conformal structure and complete integrability of the geodesic flow. Although our examples do not satisfy Einstein's equations, they point to a systematic method for constructing integrable spacetime models with Kundt geometry, which may be solutions in modified gravity theories. The construction of Hamilton and Finsler spaces via momentum-dependent metrics connects to an active area of research in deformed relativity and quantum gravity phenomenology \cite{Miron2001,Voicu2021,RSS2024,BBGLP2015PRD,CLR2022PRD,Albuquerque}. The recovery of separable magnetic systems in cylindrical coordinates within the Stäckel framework \cite{KRTT2024PRA} demonstrates that the lifting procedure accommodates both geodesic and non-geodesic Hamiltonians, expanding the method's scope.

Several natural questions remain open.

We wish to ascertain whether the Platonic-wave examples constructed here satisfy the field equations of any modified gravity theory (e.g., $f(R)$ gravity or higher-derivative theories), and if so, what physical interpretation they admit.  

It would also be interesting to understand if the non-projectable Haantjes structures arising from momentum-dependent lifts can be fully classified, according to their algebraic properties. 

The non-geodesic Hamiltonians described in \S\ref{sec:7} possess higher-order integrals of motion; a natural direction for future research is to investigate whether these models admit quantum analogues in the form of higher-order symmetry operators, and whether such operators generate new exactly or partially solvable  quantum models. 

Finally, the iterative lifting procedure suggests the existence of a "ladder" of integrable systems of increasing dimension. Understanding the geometric structure of this ladder, and the constraints it imposes, may yield further insight into the structure of integrable Hamiltonian systems.

\section {Acknowledgement}	
	

We thank Professors Giuseppe Marmo and Giorgio Tondo for a reading of the manuscript and useful discussions. 
	
	O.K.'s postdoctoral fellowship is financed by the project ``Fostering ICMAT'S Strategic Scientific Lines'' (202450E223). 
	
The research of P.T. has been supported by the Project PID2024-156610NB-I00 ``Haantjes geometry and integrable systems'' of Ministerio de Ciencia, Innovaci\'on y Universidades and by the Severo Ochoa Programme for Centres of Excellence in R\&D
(CEX-2023-001347-S), Ministerio de Ciencia, Innovaci\'{o}n y Universidades and Agencia Estatal de Investigaci\'on, Spain. 
	
P.T. is a member of the Gruppo Nazionale di Fisica Matematica (GNFM) of the Istituto Nazionale di Alta Matematica (INdAM).

AI-assisted technology (Claude from Anthropic and  ChatGPT from OpenAI) was used to improve readability and language of the manuscript. All mathematical derivations, proofs, results, and scientific conclusions are solely the work of the authors.

\end{document}